\def\BibTeX{{\rm B\kern-.05em{\sc i\kern-.025em b}\kern-.08em
    T\kern-.1667em\lower.7ex\hbox{E}\kern-.125emX}}
\newtheorem{claim}{\textsc{Claim}}
\newtheorem{proof}{\textsc{Justification}}
\DeclarePairedDelimiter\ceil{\lceil}{\rceil}
\DeclarePairedDelimiter\floor{\lfloor}{\rfloor}
\newenvironment{customlegend}[1][]{%
	\begingroup
	% inits/clears the lists (which might be populated from previous
	% axes):
	\csname pgfplots@init@cleared@structures\endcsname
	\pgfplotsset{#1}%
}{%
	% draws the legend:
	\csname pgfplots@createlegend\endcsname
	\endgroup
}%
\def\addlegendimage{\csname pgfplots@addlegendimage\endcsname}
\newcommand{\suchthat}{\;\ifnum\currentgrouptype=16 \middle\fi|\;}
\newcommand\ngsout{\bgroup\markoverwith{\textcolor{blue}{\rule[0.5ex]{2pt}{0.4pt}}}\ULon}
\newcommand{\changepj}[1]{\textcolor{black}{#1}}
\newcommand{\changeng}[1]{\textcolor{black}{#1}}
\definecolor{LimeGreen}{rgb}{0.2, 0.8, 0.2}
\definecolor{ProcessBlue}{rgb}{0.0, 0.72, 0.92}
\pgfplotsset{compat=1.17}
\begin{document}

\title{Reducing Opinion Echo-Chambers by Intelligent Placement of Moderate-Minded Agents}

\author{

    \IEEEauthorblockN{Prithwish Jana\IEEEauthorrefmark{1},~\IEEEmembership{Member, IEEE}, Romit Roy Choudhury\IEEEauthorrefmark{2},~\IEEEmembership{Fellow, IEEE}, Niloy Ganguly\IEEEauthorrefmark{3},~\IEEEmembership{Senior Member, IEEE}}\\
    \thanks{\IEEEauthorrefmark{1}Prithwish Jana is with the David R. Cheriton School of Computer Science, University of Waterloo, Canada}
    \thanks{\IEEEauthorrefmark{2}Romit Roy Choudhury is with the Department of ECE and CS, University of Illinois at Urbana-Champaign, USA}
    \thanks{\IEEEauthorrefmark{3}Niloy Ganguly is with the Department of Computer Science and Engineering, Indian Institute of Technology Kharagpur, India}
}

%\markboth{Submitted to The IEEE Transactions on Computational Social Systems}%
%{Reducing Opinion Echo-Chambers by Intelligent Placement of Moderate-Minded Agents}

\maketitle

\begin{abstract}
In the era of social media, people frequently share their own opinions online on various issues and also in the way, get exposed to others' opinions. Be it for the selective exposure of news feed recommendation algorithms or our own inclination to listen to opinions that support ours, the ultimate result is that we get more and more exposed to opinions that are closer to ours. Further, any population is inherently heterogeneous i.e. people will initially hold a varied range of opinions regarding a topic and showcase a varied range of openness to get influenced by other's opinions. In this paper, we demonstrate the different behavior put forward by open-minded and close-minded agents towards an issue, when they are allowed to freely intermix and communicate. 

We have shown that the intermixing among people leads to formation of opinion echo chambers i.e. a small closed network of people who hold similar opinions and are not affected by opinions of people outside the network. Echo chambers are evidently harmful for a society because it inhibits free healthy communication among all and thus, prevents exchange of opinions, spreads misinformation and increases extremist beliefs. This calls for reduction in echo chambers, because a total consensus of opinion is neither possible in a society nor is welcome. We show that the number of echo chambers depends on the number of close-minded agents and cannot be lessened by increasing the number of open-minded agents. It is in this context, that we identify certain `moderate'-minded agents, who possess the capability of manipulating and reducing the number of echo chambers. The paper proposes an algorithm for intelligent placement of moderate-minded agents in the opinion-time spectrum by which the opinion echo chambers can be maximally reduced. With various experimental setups, we demonstrate that the proposed algorithm fares well when compared to placement of other agents (open- or close-minded) and random placement of `moderate'-minded agents.
\end{abstract}

\begin{IEEEkeywords}
Opinion dynamics, echo chambers, social influence, social media
\end{IEEEkeywords}

\section{Introduction}
\label{sec:intro}
\IEEEPARstart{B}{efore} the advent of social media, people  shared their views and opinions only with a limited number of close acquaintance who were mostly located in their geographic vicinity. After the outburst of online platforms like Twitter, Facebook, Pinterest, etc., the whole mode of opinion exchange and the audience for hearing one's views have changed. The audience of a particular social media post depends largely on the \changepj{\textit{content recommendation}} \textit{algorithms} used by that platform. \changepj{These} platform-specific \changepj{algorithms for} personalized recommendation \changepj{analyze our behavioral pattern on the social media platform. Thereby, they } regulate~\cite{etta2022comparing} the posts that are visible and highlighted to us \changepj{on our news feed}. Humans (unconsciously) have a strong desire~\cite{levendusky2013partisan} to listen to opinions that corroborate their own existing views. So, this typically results in two things. First, \changepj{by our own will,} our friend cycle in social media mostly consists of people whose inclination on various topics support ours. Secondly, recommendation algorithms present us with filtered posts (selective exposure~\cite{cinelli2021echo}) that are most likely to increase and enhance our online time on the platform~\cite{perra2019modelling}. %\changepj{This majorly happens behind our consciousness.} 
In effect, both of these combined\changepj{ly} imply that the whole spectrum of opinions is oftentimes obscured from us or ignored by us. %With respect to various daily issues, we mostly get to know more about the opinions that are closer to our own opinions rather than the whole spectrum. 
That is, we are surrounded by like-minded friends who in turn, expose us more to contents closer to our inclinations, effectively forming \textit{echo chambers}~\cite{cota2019quantifying}. 

\changepj{An echo chamber is a closed network of people who share similar beliefs and are largely disconnected from the outside world and their opinions. Researchers~\cite{carpenter2018impact,barbera2015tweeting,bakshy2015exposure} agree with the fact that Internet users on social media can ignore opinions that they do not relate to or dislike and thus, social media has an inarguable role in inculcating close-mindedness and giving rise to the echo chamber effect. It is thus believed that echo chambers reinforce preexisting beliefs and prevent a healthy intermixing of thoughts.} 

\changepj{But how do people's belief/opinion change when they are part of an echo chamber?} In our everyday life, broadly, there are three things that play a key role in the opinion formation process within us: (i) our own past experiences that influence our cognitive reasoning \changepj{towards preference for a topic}, (ii) sources of information like media broadcasts which can be neutral or biased towards a certain agenda and (iii) social interaction with our peers and close ones. Also opinions are transient and dynamic in nature. They can be swayed and manipulated by interpersonal interactions with other individuals or captivating speeches by influential personalities. In this context, let us consider an individual agent $A_{self}$ and a few other agents $A_{1}, A_{2}, \ldots, A_{k}$ who have the potential to influence the opinion of $A_{self}$. Thus, the dynamics of the opinion of $A_{self}$ will majorly depend on: (i) the \textit{openness} of $A_{self}$ to others' opinions i.e. how much is $A_{self}$ interested in listening to others' opinions (ii) the \textit{susceptibility}~\cite{hegselmann2002opinion} of $A_{self}$ towards getting influenced by others' opinions \changepj{i.e. even if $A_{self}$ is open to listen, if there are others with different opinions who can influence her} and (iii) the actual  \textit{influence} exerted by %of each of $A_{i}$-s for $i\in[1,2,\ldots,k]$ i.e. how much influential are 
the other agents in convincing $A_{self}$ and swaying her   opinion towards their own opinions. Such interpersonal interactions and dynamics of opinion formation are intricately complex in the real world and depends subjectively from person to person. %\ngsout{As such, we lose upon some nuances when we model this to any mathematical model for simulation. The outcomes of opinion formation have become more pronounced in the era of social media. }

\changepj{Formation of echo chambers leads to amplification of held beliefs in a closed circle and thus can lead to spreading of misinformation and hampers our free judgement in an open debate. Further, as pointed out by Botte et al.~\cite{botte2022clustering}, the continuous content filtration on social media news feed is one of the main causes of today's localized clustering of opinions that leads to echo chambers in the long run. With this pretext and keeping in mind the negative effects of an opinion echo chamber, in this paper we focus on the ways by which a regulating body (like government) can restrict formation of multiple small \textit{opinion echo chambers} in today's population.}
%\sout{In this paper, we aim to focus on restricting  formation of multiple small \textit{opinion echo chambers}. XXXXX This calls for measures to restrict the echo chamber formation because it XXXXX reinforces preexisting beliefs and prevents intermixing of thoughts XXXXX.}} 
\changeng{We observe that this can be achieved if we can bring about local consensus among larger groups. Even if a total consensus of opinions is not possible, we identify the ways by which we decrease the number of echo chambers formed. By reducing the number of such chambers, we can guarantee more people in each chamber. Thus, we can bring about more interaction in the population that is necessary for a healthy community and prevention of extreme opinions within small communities.} 
%\noteng{This paragraph needs rewriting, first talk about echo chamber... then rules formation, also complete mix up of increase decrease??} \notepj{Changed}

%==========SECTION===============
\section{Related Work}
\label{sec:relwork}

According to the Social Influence Theory~\cite{kelman1958compliance}, when people communicate, they tend to influence one another, effectively making their opinions more similar. In this line, the dynamics of opinion evolution depends on the choice of neighborhood and also the set of protocols that indicate the aggregation function of an agent set's opinions, who are believed to be influencers of one's opinion. With interaction among people in the population, the opinion of a person can take the form of the majoritarian opinion in the neighborhood~\cite{galam2002minority}, a random opinion in the neighborhood~\cite{holley1975ergodic} or opinion of the agent who is `nearest' (based on some pre-decided metric). The DeGroot, Friedkin-Johnsen, Deffuant-Weisbuch and Hegselmann-Krause are some well-known classical protocols/models, each having a set of rules determining how the agents interact. The real-life opinion dynamics among people however is not bound by such strict rules, but for our proposition of reducing echo chambers, we try to proceed with such a model that best mimics the dynamics of opinion on social media platforms.

In the DeGroot model~\cite{degroot1974reaching}, every person has weights corresponding to every other person in the population, that represents how much he/she is influenced by another person's opinions. In this model, the set of neighbors of each agent is fixed throughout and the dynamics follow synchronous updates for every agent in each discrete timestep, on a weighted averaging scheme for individual opinions. The Friedkin-Johnsen model~\cite{friedkin1990social,friedkin1999social} works on a similar line with a fixed network of neighbors but, this model has provision for stubborn agents who prefer to hold on more to their own opinions. In the Deffuant-Weisbuch model~\cite{deffuant2001mixing}, instead of every agent updating opinions at the same time, it is for random pair of agents. The problem with such models is their inadaptability to the modern real-world opinion exchange mechanism. Opinion exchange was in accordance with these model dynamics before the advent of the Internet and social media. At that time, people could only intermix with a fixed set of geographical neighbors. In recent times, people exchange their opinions online over social media platforms, where they can get introduced to anyone even on the opposite side of the planet, just by virtue of their similar trends of opinions~\cite{noorazar2020classical}. As such, our neighbors are no longer fixed and limited by geography, they vary dynamically depending on our current opinions and opinions of others in the population. This is possible by the Hegselmann-Krause (H-K) bounded confidence model~\cite{hegselmann2002opinion} that gives a more flexible view of the neighborhood definition by a time-varying adjacency matrix, as we will elaborate upon in \changepj{our section of problem formulation}. In the later sections of this paper, we will proceed with our propositions on the H-K model, with a dynamic interaction-network of neighbors.

Researchers like Pineda et al.~\cite{pineda2015mass} and Vasca et al.\cite{vasca2021practical} have brought forward the fact that our population is heterogeneous in nature and consists of close-minded and open-minded agents, having disparate behavior towards changing one's opinion. Baumann et al.~\cite{baumann2020modeling} believe that heterogeneous openness of persons, their different mixing patterns and polarized opinions are one of the key causes of echo chamber formation in a population, via interaction/exposure on social media. \changepj{Begby~\cite{begby2022belief} also mentions that people can start holding extremist beliefs as a result of opinion exchange within their close network of friends. Borelli et al.~\cite{borrelli2021quantitative} mentions that when there are two groups in any online discussion, the people focus on the differences between their group rather than the similarities. In the long run, this harbours prejudices, aggressive emotions towards the other group and affective polarization. Although this phenomenon and formation of echo chambers is by itself detrimental, but this is a natural and unavoidable entailment of any socio-cognitive sequence of actions. But, in this paper, we will show that the number of echo chambers can be reduced by introducing some external agents in the population. An interesting observation from the structural hole theory~\cite{lin2021structural} is that, unconnected groups of people in a social media may create a hole in the social structure, which can be occupied by externally-introduced individuals, who can help synthesize harmony between the two groups. With a similar aim, this paper focuses on ways by which the echo chamber effect can be minimized in today's population with a predominantly online presence. }
%\sout{Formation of echo chambers will lead to amplification of held beliefs in a closed circle and thus can lead to spreading of misinformation and hamper our free judgement in an open debate. Further, as pointed out by Botte et al.~\cite{botte2022clustering}, the continuous content filtration on social media news feed is one of the main causes of today's localized clustering of opinions that leads to echo chambers in the long run. With this pretext and keeping in mind the negative effects of an opinion echo chamber, in this paper we focus on the ways by which a regulating body (like government) can reduce the number of such chambers in today's population.}} \noteng{may be some part of it can be added to the introduction} \notepj{Moved}

%==========SECTION===============
\section{Problem Formulation}
\label{sec:HKboundedconfidence}

We consider a population of $n$ agents. Each such agent $i\in\{1,2,\ldots,n\}$ possesses an opinion regarding a certain issue.  Opinions are represented in the form of a continuous variable $x_i(t)\in\left[0,1\right] $ that varies dynamically with time. Time instances $t=0,1,\ldots$ are considered to be discrete. At a particular time instance $t$, the opinion $x_i(t)$ of agent $i$ represents her inclination of belief towards two opposite poles $P_{left}$ and $P_{right}$.  Opposite poles may represent $\{P_{left}=$ liberal, $P_{right}=$ conservative$\}$, $\{P_{left}=$ Democrat, $P_{right}=$ Republican$\}$, $\{P_{left}=$ left-leaning partisan media, $P_{right}=$ right-leaning partisan media$\}$ etc. At time instance $t$, agents with belief completely in sync with  $P_{left}$ has $x_i(t)=0$, similarly one in sync with  $P_{right}$ has $x_i(t)=1$ and others have opinion scaled somewhere in the middle between 0 and 1. The \textit{opinion profile} of the population at time $t$ is thus expressed as {\bf x}(t)=$\left[ x_1(t), x_2(t), \ldots, x_n(t)\right] $.

With progress of time, each agent interacts with other agents in the population and her belief gets influenced. In \changepj{%\sout{Section~\ref{sec:intro}}
the introduction},  we highlighted the fact that the rise of social media has resulted in the formation of echo chambers within a population. In an echo chamber, each agent is majorly influenced by like-minded friends who in turn, bear like-minded opinions. To emulate this in our framework, we consider the \textit{bounded confidence model}, where each agent $i$ has a fixed \textit{confidence interval} $\epsilon_i$ associated with it. At time instance $t$, the agent is influenced only by the neighboring agents whose opinions are located in the $\epsilon_i$-neighborhood of her opinion $x_i(t)$.  That is, at time $t$, the neighborhood $N_i(t)$ of an agent $i$ is defined by the set of agents:

\begin{equation}
	\label{eqn_HK_neighborsDef}
	N_i(t) = \left\lbrace j \suchthat \;\;\left|x_j(t)-x_i(t)\right|\leq\epsilon_i, \;\;\;j\in[1,n] \right\rbrace 
\end{equation}
	
As such, an agent $i$ confides only upon neighboring agents in its confidence interval. Also, an agent is a neighbor of itself. Thus, confidence interval defines the neighborhood  of a person -- more the interval, higher the number of neighbors for a person and the more open he is towards listening to others' opinions. A population where every agent has the same confidence interval $\epsilon$ is termed as a \textit{homogeneous} population. On the contrary, in real-life scenario, the population comprises people with varied confidence intervals i.e. a \textit{heterogeneous} population where different agents can have different $\epsilon_i$, as exemplified in Figure~\ref{fig:boundedConfPlot}. 

\begin{figure}[h!]
    \centering
    \subfloat[\label{fig:boundedConfPlot}]{
        \includegraphics[width=0.7\columnwidth]{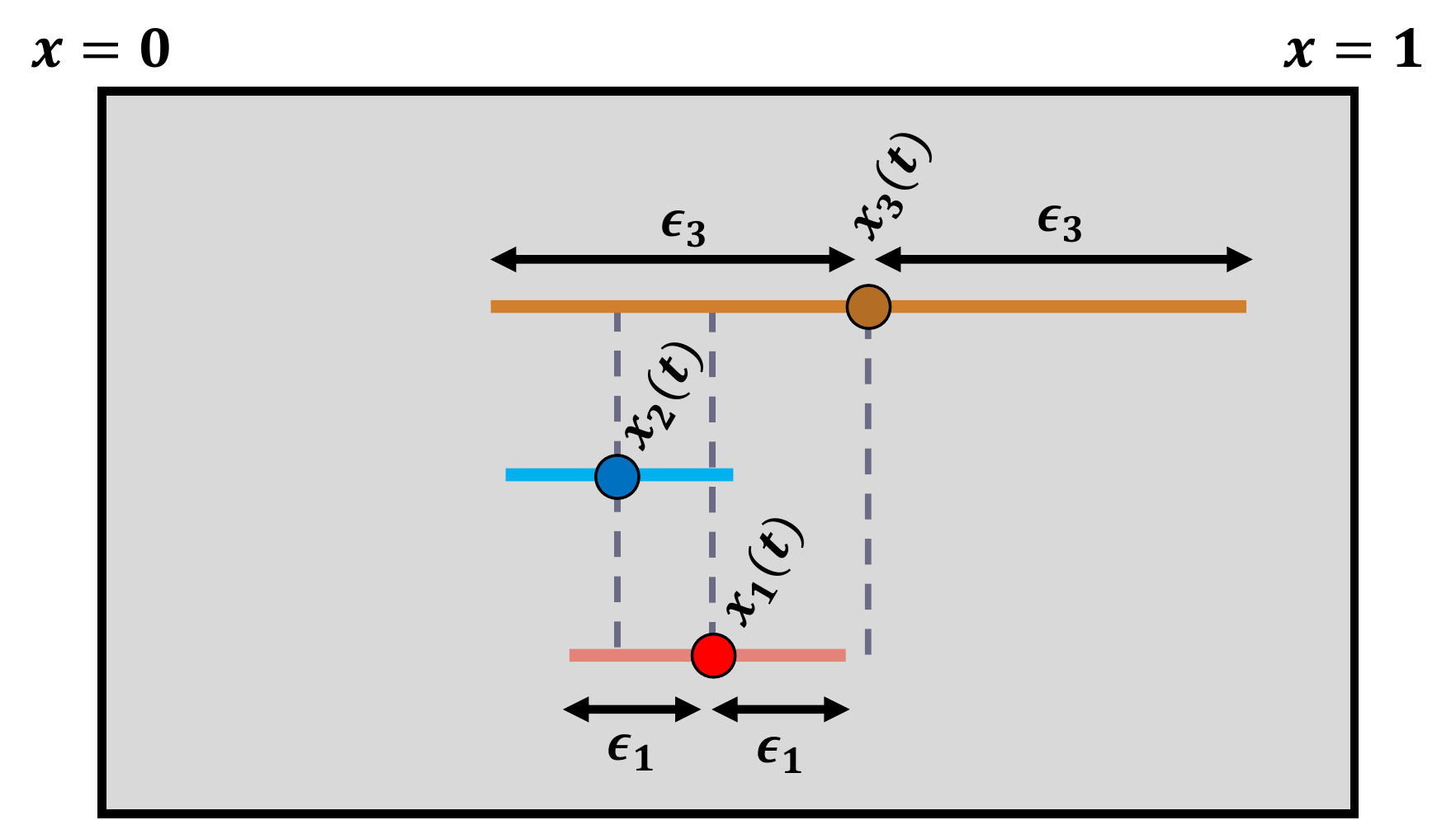}
    }
    \subfloat[\label{fig:boundedConfGraph}]{
        \includegraphics[width=0.25\columnwidth]{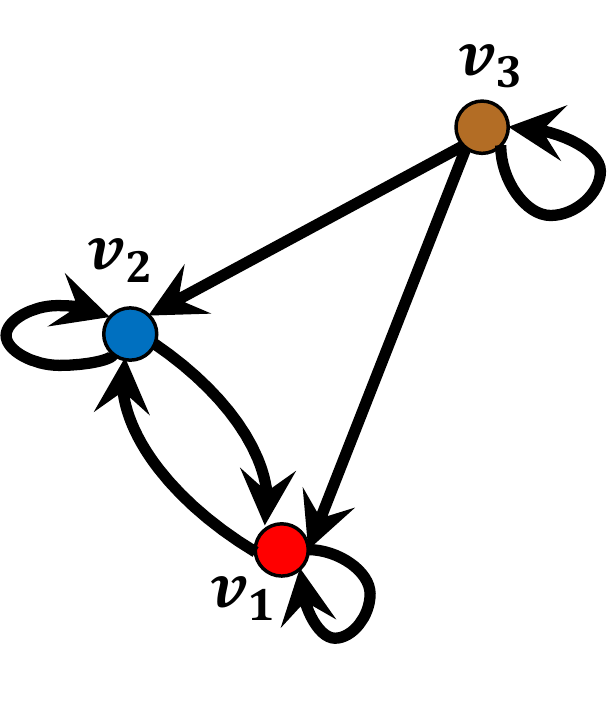}
    }
    \caption{Illustration of the bounded confidence model and graph-theoretic representation with a heterogeneous population. \textbf{(a)} Three agents $\{1,2,3\}$ with opinion profile $\left[x_1(t), x_2(t), x_3(t)\right]$. The $X$-axis denotes the whole $\left[0,1\right]$ opinion spectrum and confidence intervals are shown by colored lines. Agents $1$ and $2$ are mutual neighbors, both being neighbors of Agent $3$. \textbf{(b)} The corresponding graph $\mathcal{G}(\mathcal{V}, E,t)$ where a directed edge $(v_i\to v_j)\in E$ for $v_i,v_j\in\mathcal{V}$ indicates agent $i$ is influenced by agent $j$ at time-instance $t$.}
    \label{fig:boundedConf}
\end{figure}

\subsection{Hegselmann-Krause bounded confidence model}

The echo chamber formation closely relates to the Hegselmann-Krause (H-K) bounded confidence model~\cite{hegselmann2002opinion}. According to this, opinion of an agent $i$ at time instance $t+1$ takes the form of an average of all neighboring agents' opinions in $N_i(t)$. So, the opinion dynamics is expressed as:

\begin{equation}
	\label{eqn_HK}
	x^{HK}_i(t+1)=\frac{1}{|N_i(t)|}\times \sum_{j\in N_i(t)} x_j(t)
\end{equation}

As per the H-K model, in each iteration, the opinion of an agent gets changed to the average of the opinions of all her neighbors including herself. In reality, however, agents tend to hold on to their earlier opinion for a longer time. To state it otherwise, when changing one's opinion, an agent gives more weightage to her own earlier opinion than the weightage given to all other neighbor's opinions combined. This dynamics can be expressed by modifying Eqn.~\ref{eqn_HK} as:

\begin{equation}
	\label{eqn_HK_ownWeighted}
	x^{HK_{mod}}_i(t+1)=w_{own}\times x_i(t) + \frac{1-w_{own}}{|N_i(t)\setminus \{i\}|}\times \sum_{j\,\in\, N_i(t)\setminus \{i\}} x_j(t)
\end{equation}

where, $1-w_{own}<w_{own}\leq 1\implies w_{own}\in (0.5,1]$.

\subsection{Example}

%\noteng{Put this as an example and also using equations }\notepj{Added parah. heading and equations}
Let's say that the opinion profile of a population of ten agents is $\mathbf{x}(T)=\left[ 0.1, 0.2, 0.4, 0.4, 0.5, 0.7, 0.7, 0.8, 0.8, 1.0\right] $ at time $t=T$. The fifth agent has an opinion of $x_5(T)=0.5$ and let's say it's confidence interval $\epsilon_5=0.25$. Accordingly, by Eqn.~\ref{eqn_HK_neighborsDef}, at this instance the neighbors of the fifth agent are $N_5(T)=\left\lbrace 3,4,5,6,7 \right\rbrace$. Considering the normal H-K opinion dynamics model of Eqn.~\ref{eqn_HK}, the agent will weigh all her neighbors' opinions as well as her own opinion equally by a factor of $\frac{1}{5}=0.2$. That is, the opinion of the fifth agent at time instance $t=T+1$ will be:

\begin{small}
\begin{equation}
	\label{eqn_HK_example_1}
	\begin{split}
	x^{HK}_5(T+1)&=\frac{1}{5}\times\left(x_3(T)+x_4(T)+x_5(T)+x_6(T)+x_7(T)\right)\\
	&=\frac{0.4+0.4+0.5+0.7+0.7}{5}
	~=~0.54
	\end{split}
\end{equation}
\end{small}

On the contrary as per the modified H-K model in Eqn.~\ref{eqn_HK_ownWeighted}, after weighing own opinion by $w_{own}=0.6$ (say), the fifth agent will weigh all her   neighbors' opinions by a factor of $\frac{1-0.6}{4}=0.1$. So, the opinion of the fifth agent at time instance $t=T+1$ will be:

\begin{small}
\begin{equation}
	\label{eqn_modHK_example_1}
	\begin{split}
	x^{HK_{mod}}_5(T+1)&=\left(0.6\right)\times x_5(T)+\\&\;\left(\frac{1-0.6}{4}\right)\times\left(x_3(T)+
    x_4(T)+x_6(T)+x_7(T)\right)\\
	&=0.6\times 0.5+\frac{0.4+0.4+0.7+0.7}{10}
	=0.52\\&\text{(which is closer to }\;x_5(T)\text{, than}\;x^{HK}_5(T+1)\text{)}
	\end{split}
\end{equation}
\end{small}

\subsection{Equilibrium Time.}
We define the \textit{equilibrium time} ($t_{eqm}$) as the time instance $t$ when   there is no further change in the opinion profiles. Mathematically, it can be represented as the minimum $t\in\{0,1,\ldots\}$ for which $\forall j \in \{1,2,\ldots,n\}$,  
$ \left|x_j(t) - x_j(t+1)\right| \leq \delta$, where $\delta$ is a very small positive real number tending to 0. The \textit{opinion clusters} ($C_{eqm}$) at the time of equilibrium ($t_{eqm}$) is defined as the distinct values in the opinion profile, that exist after equilibrium. A single opinion cluster at equilibrium indicates \textit{consensus}, two indicate \textit{polarization} and multiple clusters indicates \textit{fragmentation} of opinions in the population. 

\subsection{Graph-Theoretic Representation.}
We intend to represent the population and their respective neighborhood in the form of a graph network. Let's consider a time-varying graph $\mathcal{G}(\mathcal{V}, E,t)$. Here, each vertex $v_i\in \mathcal{V}$ corresponds to an agent $i\in\{1,2,\ldots,n\}$ in the population. As such, each vertex has an associated time-varying opinion $x_i(t)\in\left[0,1\right] $. There exists a directed edge $(v_i\to v_j)\in E$ if agent $j$ is a neighbor of agent $i$ i.e. opinion $x_j(t)$ lies in the $\epsilon_i$-neighborhood of opinion $x_i(t)$ i.e. $j\in N_i(t)$. To put it otherwise, edge $(v_i\to v_j)\in E$ indicates that agent $i$ is influenced by agent $j$. And the set $\{v \suchthat v_i\to v \in E \}$ forms the neighborhood $N_i(t)$ of agent $i$ at time $t$. From a vertex that corresponds to an agent $i$, there are outgoing edges to all those vertices (agents) that influence agent $i$'s opinion. Edge-weight is $1$ for all edges. This is depicted in Figure~\ref{fig:boundedConfGraph}.
%\sout{\noteng{Motivation of the graph is not clear} \notepj{Explained}}

Framing the problem of opinion dynamics by a graph helps in identifying the pull that an agent faces over time (leftward or rightward) while staying in a connected community, where everybody's opinion is evolving with time. The pull that an agent $a$ experiences can be represented towards the position of mean opinions of those agents, to which there are outgoing edges from $a$. In fact, whether opinions of two agents $a_i$ and $a_j$ will ultimately converge (i.e. whether consensus of opinion will be achieved) or not, is related to the presence of paths in the graph. Specifically, the necessary condition for such two agents to reach a consensus at certain stage depends on whether there exists a path (sequence of directed edges) from $a_i$ to $a_j$ or vice-versa. But it may be noted that, the presence of a path is not a sufficient condition for opinion consensus -- it depends on opinions of neighbors also. In the next section, we will categorize the agents based on their openness and will \changepj{then} explain how their respective dynamics of opinion can be explained with the help of such graphical representation.
%\noteng{I dont see any reference to the figures till now.} \notepj{Added references to Fig 1.}

%==========SECTION===============
\section{Agent Classification based on Openness}
\label{sec:agentClassif}
As mentioned in the previous section, any uniform random subset of a real-life social media population will be heterogeneous, with respect to confidence intervals of people. That is, people will have varied range of openness towards others' opinions. In this line, we try to categorize the agents \changepj{%\sout{according to behavior towards reaching equilibrium.} 
in terms of equilibrium time ($t_{eqm}$) i.e. how fast they reach an equilibrium state of opinion.} 
\changepj{For each openness value i.e. confidence interval ($\delta$), we consider a homogeneous population where all agents have the same openness towards others' opinions -- and study how fast the whole population reaches an opinion equilibrium.} 

We consider a homogeneous population where every agent has the same confidence interval i.e. $\epsilon_i=\epsilon \;\forall i \in \{1,2,\ldots,n\}$. And at $t=0$, the opinions are evenly spaced between 0 and 1 (both inclusive) such that the difference of opinion between two agents of consecutive opinions is $\frac{1}{n-1}$. Effectively, this would mean that $\mathcal{G}(\mathcal{V}, E,t)$ is a $k$-regular graph where each node (barring the agents with extreme opinions closer to $0$ and $1$) has $k=\min(n,\; 2\times\floor{\epsilon*(n-1)}+1)$ outgoing edges and same number of incoming edges. So, as $\epsilon$ gets closer to $0$, the graph becomes a disconnected graph with $n$ vertices and $n$ edges by which every vertex is connected to itself. In absence of interaction between the vertices, it is already in equilibrium state, thus $t_{eqm}=0$. On increasing $\epsilon$, just when few edges start forming, $t_{eqm}$ should shoot up because $\epsilon$ (openness interval) is still low and it takes time for agents with small openness interval to come to a local consensus. In fact, the convergence to consensus becomes exponentially faster~\cite{li2020bounded} as openness increases and more and more edges get introduced. On increasing $\epsilon$ further, when it becomes closer to $1$, every vertex is connected to all other vertices including itself, and thus it becomes a fully connected graph with self-loops. Convergence to equilibrium in such a graph is very fast because every agent is moving towards the global opinion consensus average from the very beginning. So, intuitively the time to reach equilibrium $t_{eqm}$ should decrease monotonically with increase in $\epsilon$ in a homogeneous population, which in turn adds more inter-agent edges in the graph.

\begin{figure}[h!]
    \centering
    \subfloat[\label{fig:epoch_vs_openness}]{\pgfplotsset{width=\columnwidth,height=0.28\textwidth,compat=1.9}
                \input{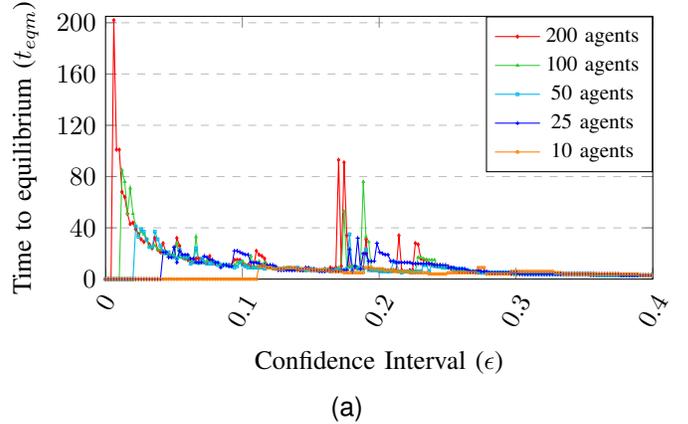}
    }\\
    \subfloat[\label{fig:cluster_vs_openness}]{\pgfplotsset{width=\columnwidth,height=0.28\textwidth,compat=1.9}
                \input{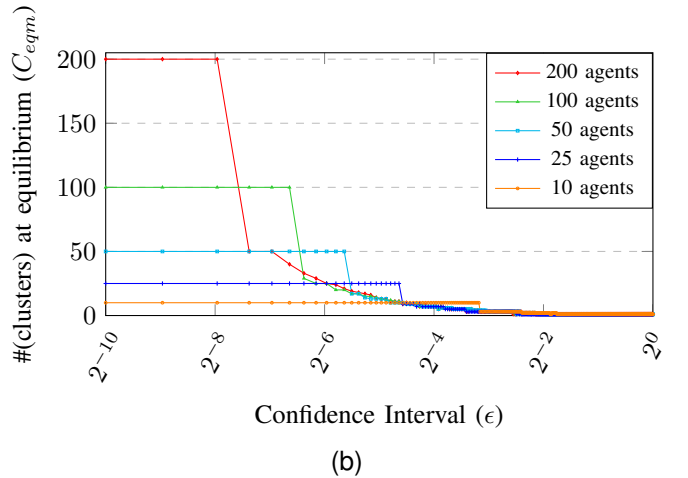}
    }
    \caption{Variation in time to reach equilibrium and number of clusters at equilibrium, for a homogeneous population where every agent has $\epsilon$ confidence interval.}
    \label{fig:epochcluster_vs_openness}
\end{figure}

But there are some important properties to be noted when experimenting with such a setup in practice. As depicted in Figure~\ref{fig:epochcluster_vs_openness}, we experimented with various population sizes. First, $t_{eqm}$ starts at $0$ for lower values of $\epsilon$, irrespective of the population size. It shoots up when edges just start to form in the graph. But as the population size gets larger, a relatively smaller value of $\epsilon$ is required for this sudden upward spike to occur in $t_{eqm}$. This is because, higher the population size, it is more likely to find people closer to one's opinion even if (s)he is open to listen to a very short interval around own. Upon increasing $\epsilon$  beyond $0.1$, the $t_{eqm}$ does not decrease monotonically. Instead, there is a certain region of confidence interval $\epsilon$ where the time to equilibrium becomes higher, as opposed to the trend. This is true for all population sizes and it happens when $\epsilon$ is in $\left[0.17,0.22\right]$, as evident from Figure~\ref{fig:epoch_vs_openness}. Further as in Figure~\ref{fig:cluster_vs_openness}, irrespective of population size, there is a global consensus ($C_{eqm}=1$) at equilibrium for $\epsilon\geq0.25$.

%\noteng{I am not sure of the identification} \notepj{Added some explanations}
\changepj{Thus, as understandable from Figure~\ref{fig:epochcluster_vs_openness}, with change in confidence interval, there are notable effects over the time to reach equilibrium and also the number of opinion echo-chambers or clusters.} Keeping this in mind, we can identify three types of agents who may play different but significant roles in the formation of echo chambers. First type of agent are the \textit{close-minded} ones, whose openness/confidence interval $\epsilon_{close}$ is very close to zero. These agents interact with a very limited set of neighbors and it is a difficult feat to move them from their held beliefs. Second type comprises the \textit{open-minded agents} whose opinions can get influenced from almost all agents ranging over the whole spectrum of opinions. They can be moulded very easily and they take very less time to reach opinion consensus and equilibrium. Their confidence interval $\epsilon_{open}$ is large, beyond $0.5$ (from Figure~\ref{fig:epochcluster_vs_openness}, behaviors are similar for all $\epsilon\in\left[0.5,1\right]$). The third type of agents are the \textit{moderate-minded} agents whose confidence interval $\epsilon_{moderate}\in\left[0.17,0.22\right]$. They have openness more than close-minded but less than open-minded agents ($0\approx\epsilon_{close}<\epsilon_{moderate}<\epsilon_{open}\approx0.5$).

%==========SECTION===============
\section{Role of Different Agents in Cluster Formation}
\label{sec:roleClusterFormn}
As we explained in the previous section, three specific types of agents: \textit{open-}, \textit{close-} and \textit{moderate-minded} are of particular interest to us when analyzing the cluster formation behavior. As per definition, the open-minded agents have large confidence intervals $\epsilon_{open}$ and thus a larger neighborhood. The vertices corresponding to such agents will have high out-degree. On similar terms, vertices corresponding to close-minded agents (with low confidence intervals $\epsilon_{close}$) will have zero or very low out-degree. 

Research~\cite{rosenfeld2021can,liu2022deep} finds that during the pandemic, people have become more conservative of their own held beliefs and even their political ideologies have remained static over time. Also in another context, there was a study~\cite{knevzevic2022relationship} on a population comprising young adults in a conflict-hit society. The authors found that given specific adverse conditions, any person can develop strong attachment to a set of beliefs because these are related to inherent `human proclivities' -- and in this, some people will have substantially more tendency to develop such attachments than others. This serves as a verification of the fact that a society  always harbours some close-minded people who  strongly hold back to their held beliefs. Let's suppose that in a hypothetical situation, there are only close-minded agents who are not totally out of each other's influence. That is, there exists at least some edges in the corresponding graph $\mathcal{G}(\mathcal{V}, E,t)$. As per Figure~\ref{fig:epochcluster_vs_openness}, since this is a homogeneous population with low $\epsilon$, this will lead to a high number of clusters i.e. echo chambers (which is approximately equal to the number of agents) and will take a large time to reach equilibrium, given there are many agents. With this pretext, we present our first claim:

\vspace{2mm}

\begin{claim}
\textit{In a population containing close-minded people, introducing more close-minded people will increase the number of echo-chambers}
\end{claim}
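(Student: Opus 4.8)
The plan is to argue through the graph-theoretic picture of Section~\ref{sec:HKboundedconfidence}, exploiting the fact that a close-minded agent has $\epsilon_{close}\approx 0$ and therefore an out-degree that is zero or very small. My starting point is the necessary condition for consensus already noted in the excerpt: two agents $a_i,a_j$ can end up in the same opinion cluster only if, at some time, there is a directed path connecting them in $\mathcal{G}(\mathcal{V},E,t)$. Consequently the number of clusters at equilibrium, $C_{eqm}$, is bounded below by the number of weakly-connected components of the interaction graph, and I want to show that injecting more close-minded agents can only add components (equivalently, fresh clusters) without collapsing the existing ones.

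First I would pin down the cluster geometry at equilibrium. Using Eqn.~\ref{eqn_HK_neighborsDef} together with the averaging dynamics of Eqn.~\ref{eqn_HK} (or its weighted variant Eqn.~\ref{eqn_HK_ownWeighted}), I would show that at $t_{eqm}$ any two \emph{distinct} clusters must be separated in opinion by strictly more than $\epsilon_{close}$; otherwise some agent would still see a neighbour in the other cluster and the profile would not be stationary. This makes each cluster a locally isolated point mass. The second step is a width bound: because a close-minded agent barely moves (its update in Eqn.~\ref{eqn_HK_ownWeighted} is dominated by the $w_{own}$ term and it only ever averages over an $\epsilon_{close}$-window), a cluster can recruit only agents whose initial opinions lie within an $O(\epsilon_{close})$ band. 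Hence at most $O(1/\epsilon_{close})$ agents fit into one cluster, so as the number $n$ of close-minded agents grows the number of clusters must grow with it; in the regime of Figure~\ref{fig:cluster_vs_openness} this is exactly the observed $C_{eqm}\approx n$ plateau for small $\epsilon$.

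With these two facts I would finish by a placement/monotonicity argument: the newly introduced close-minded agents carry fresh opinions, and a generic fresh opinion either lands in an empty gap of $[0,1]$ (immediately a new singleton component, so $C_{eqm}$ increases by one) or lands inside an already-saturated band, where by the width bound it still spawns an additional distinct cluster rather than being absorbed. Cleanest is to state the result in the regime $\epsilon_{close}<\min_{i\neq j}|x_i(0)-x_j(0)|$, where no inter-agent edge exists at all, $N_i(t)=\{i\}$, the profile is already stationary, and $C_{eqm}=n$ exactly, so adding $k$ agents deterministically yields $C_{eqm}=n+k$.

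The main obstacle is the \emph{bridging} case, which shows that a naive worst-case monotonicity is actually false: a close-minded agent inserted exactly between two clusters that are more than $\epsilon_{close}$ but less than $2\epsilon_{close}$ apart can become a common neighbour of both and trigger a merge, \emph{decreasing} the count. I would neutralise this by observing that bridging requires the two target clusters to already sit inside a window of width $<2\epsilon_{close}$, which is vanishingly small as $\epsilon_{close}\to 0$ and non-generic for arbitrary (or random) initial profiles; thus the creation effect dominates the rare merging effect. I therefore expect the honest statement of the claim to be a \emph{typical/average-case} monotonicity, matching the empirical curves of Figure~\ref{fig:epochcluster_vs_openness}, rather than a universal inequality, and the bulk of the work lies in quantifying ``generic'' precisely enough that the net change in $C_{eqm}$ is provably positive.
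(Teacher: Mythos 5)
Your argument is sound in outline but takes a genuinely different route from the paper. The paper's own justification is purely empirical: it observes that introducing more close-minded agents amounts to growing a homogeneous low-$\epsilon$ population, and then simply reads off from Figure~\ref{fig:cluster_vs_openness} that, for fixed small $\epsilon$, a larger population yields more clusters at equilibrium. You instead attempt an analytical proof via the graph-theoretic representation: lower-bounding $C_{eqm}$ by the number of connected components, showing clusters at equilibrium are separated by more than $\epsilon_{close}$, bounding the width of any cluster by $O(\epsilon_{close})$ so that each cluster has bounded capacity, and then arguing that a freshly inserted agent generically spawns a new cluster. Your approach buys considerably more than the paper's: the special regime $\epsilon_{close}<\min_{i\neq j}|x_i(0)-x_j(0)|$ gives an exact deterministic statement ($C_{eqm}=n+k$), and your identification of the bridging case --- where an agent placed between two clusters less than $2\epsilon_{close}$ apart can trigger a merge and \emph{decrease} the count --- is an honest caveat the paper never confronts; it correctly forces you to weaken the claim to a typical-case monotonicity, which is in fact all the empirical evidence in Figure~\ref{fig:epochcluster_vs_openness} supports. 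The remaining work in your sketch (quantifying ``generic'' and making the width bound rigorous under the averaging dynamics) is nontrivial, but even as a sketch it is more informative than the paper's appeal to a plot.
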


\begin{proof}
This essentially means increasing the number of agents in a homogeneous population of close-minded agents. From Figure~\ref{fig:cluster_vs_openness}, for the same $\epsilon$, increasing the number of agents implies an increase in number of clusters or echo chambers.
\end{proof}

\vspace{2mm}

\begin{claim}
\textit{In a homogeneous population containing only close-minded agents, the intuitive approach of reducing opinion clusters will be to add more and more open-minded agents. But, in actuality, introducing new open-minded people will not be able to interfere with and reduce the existing echo-chambers}
\end{claim}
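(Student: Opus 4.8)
The plan is to exploit the directional asymmetry of influence built into the graph $\mathcal{G}(\mathcal{V}, E, t)$: by the definition of the neighborhood in Eqn.~\ref{eqn_HK_neighborsDef}, whether agent $i$ is influenced by agent $j$ (equivalently, whether the edge $v_i \to v_j$ exists) is controlled by $i$'s confidence interval $\epsilon_i$ alone, and not by $j$'s. Consequently, an open-minded agent with large $\epsilon_{open}$ listens to almost everyone and has high out-degree, whereas a close-minded agent with $\epsilon_{close} \approx 0$ admits an out-edge to another agent only when that agent's opinion lies within the vanishingly small window $[x_c - \epsilon_{close}, x_c + \epsilon_{close}]$. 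I would make this precise by first fixing the echo chambers of the pure close-minded population as the opinion clusters reached at equilibrium in the absence of open-minded agents, and recording that consecutive clusters are separated in opinion by more than $\epsilon_{close}$ (otherwise they would already have merged).

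Next I would argue that the set $S$ of close-minded vertices is, to leading order, \emph{influence-closed}: since each close-minded agent only confides in neighbors within $\epsilon_{close}$ of its own opinion, an open-minded agent can enter $N_c(t)$ for a close-minded $c$ only if their opinions nearly coincide. Even in that event, the contribution of such a neighbor to $c$'s update is bounded, via Eqn.~\ref{eqn_HK_ownWeighted}, by $\frac{1-w_{own}}{|N_c(t)\setminus\{c\}|}\,|x_o(t)-x_c(t)| \le (1-w_{own})\,\epsilon_{close}$ per step, because any in-window neighbor differs from $x_c$ by at most $\epsilon_{close}$. As $\epsilon_{close}\to 0$ this perturbation vanishes, so the close-minded subsystem evolves (to leading order) exactly as it would in isolation and therefore settles into the same clusters. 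This is reinforced by the behaviour in Figure~\ref{fig:epoch_vs_openness}: open-minded agents, having $\epsilon_{open} \ge 0.5$, race toward the global average and thus do not linger inside any close-minded agent's tiny confidence window long enough to reshape the cluster structure.

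Since the dynamics of the close-minded agents then does not depend on how many open-minded agents are added, the count of close-minded echo chambers is invariant under such additions; the open-minded agents merely coalesce among themselves (reducing only their own number, not the pre-existing chambers), which establishes the claim.

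The hardest part is rigorously handling the boundary case in which an open-minded agent's opinion transiently falls inside some close-minded agent's $\epsilon_{close}$-window, so that $S$ is not strictly influence-closed. There I would need to bound the \emph{cumulative} drift imparted to the close-minded opinions over all timesteps and show it stays below the inter-cluster gaps, so that two previously distinct chambers can never be pulled within $\epsilon_{close}$ of each other and merge. Controlling this accumulation — rather than just the per-step perturbation — is the crux, and it relies on the self-weighting $w_{own}>0.5$ together with the rapid escape of open-minded agents from the narrow close-minded windows.
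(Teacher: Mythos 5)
Your proposal is sound and arrives at the same conclusion, but by a genuinely different route than the paper. The paper's justification is structural and graph-theoretic: because $\epsilon_{o}$ covers nearly the whole opinion spectrum, the open-minded vertices form a strongly-connected, near-clique component that converges to its own majoritarian consensus exponentially fast, while the close-minded vertices appear only as pendant in-vertices hanging off that component; the open-minded agents therefore ``outrun'' the close-minded ones and leave their chambers as-is. You instead argue by perturbation: since any agent in $N_c(t)$ lies within $\epsilon_{close}$ of $x_c(t)$ by the very definition of the neighborhood, the per-step displacement of a close-minded opinion is at most $(1-w_{own})\,\epsilon_{close}$, so the close-minded subsystem is influence-closed to leading order and settles into the same clusters it would reach in isolation. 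Your version yields a cleaner invariance statement (the close-minded dynamics is autonomous up to $O(\epsilon_{close})$ perturbations, regardless of how many open-minded agents are added) and does not require the open-minded agents to form a clique; the paper's version supplies exactly the ingredient you flag as the crux --- the open-minded agents exit any close-minded window after $O(1)$ steps because they converge so fast, which is what turns your per-step bound into a cumulative drift of order $\epsilon_{close}$ rather than $t_{eqm}\cdot\epsilon_{close}$. Neither argument rigorously excludes the borderline case in which two close-minded clusters sit barely more than $\epsilon_{close}$ apart and a small drift merges them; the paper's justification is itself heuristic on this point, so your acknowledged gap does not put you behind it, but closing it would require welding your perturbation bound to the paper's fast-convergence observation.
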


\begin{figure}[h!]
    \centering
    \subfloat[\label{fig:closeOpenExample_graph}]{\includegraphics[width=0.9\columnwidth]{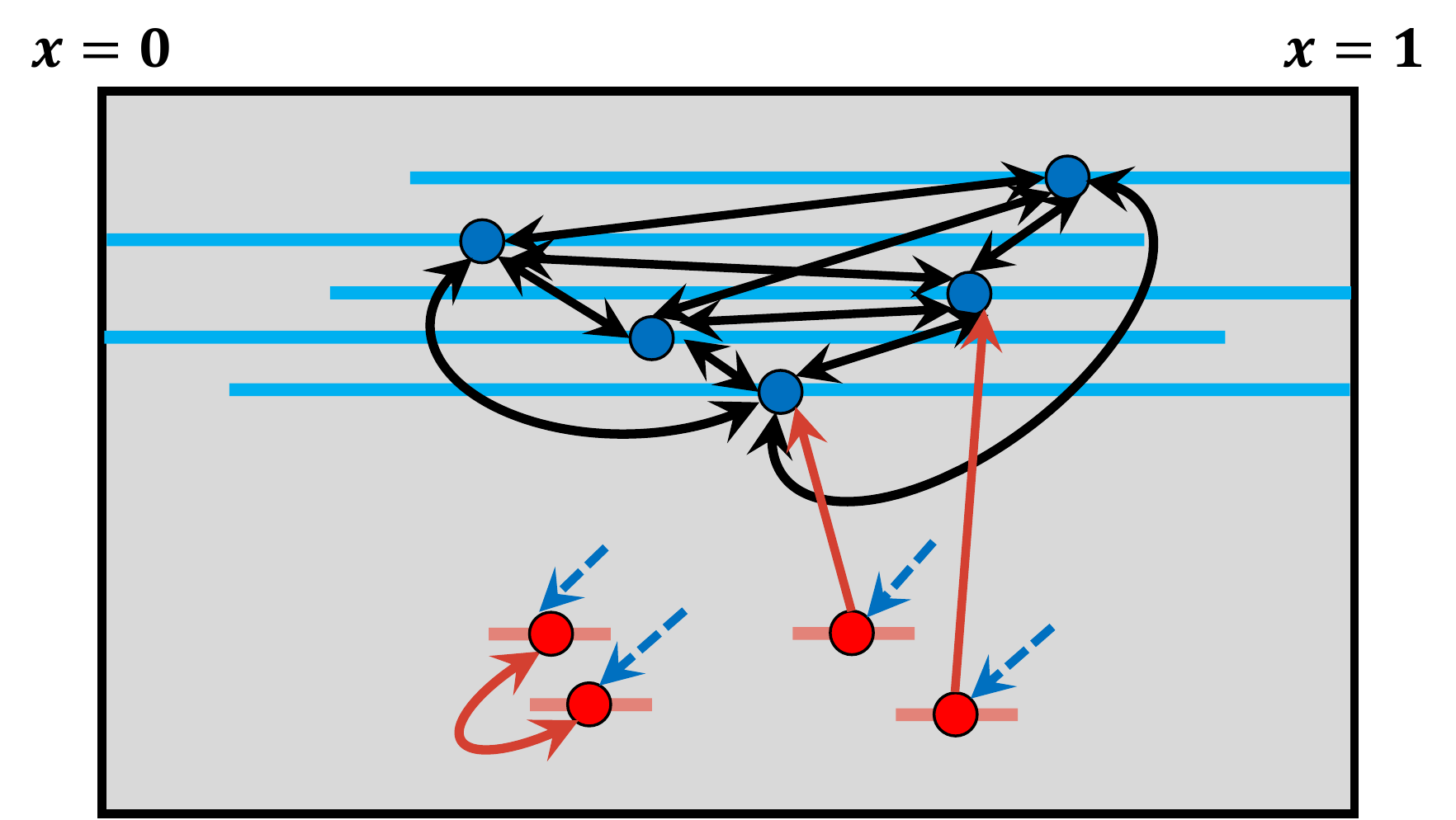}}\\
    \subfloat[\label{fig:closeOpenExample}]{\pgfplotsset{width=0.85\columnwidth,height=0.3\textwidth,compat=1.9}
            \begin{tikzpicture}
\begin{axis}[
xlabel={Opinion ($x(t)$)},
ylabel={Time instance ($t$)},
xmin=0, xmax=1,
ymin=0, ymax=12,
xtick={0,0.1,0.2,0.3,0.4,0.5,0.6,0.7,0.8,0.9,1.0},
ytick={0,4,8,12},
xticklabel style={rotate=60},
legend style={at={(1,1)},anchor=north east,font=\footnotesize},
ymajorgrids=true,
grid style=dashed
%xmode=log
%ylabel style={rotate=-90},
%ymode=log,
%log basis y={10}
]

\addplot[
color=blue,
mark=o,
mark size=1pt
]
table {
0.3 0
0.4975 1
0.5178703703703703 2
0.5090946502057613 3
0.5042192501143119 4
0.501510694507951 5
0.5000059413933061 6
0.49916996744072567 7
0.49870553746706986 8
0.49844752081503885 9
0.49830417823057715 10
0.49822454346143175 11
0.49818030192301765 12
0.4981557232905654 13
0.49814206849475856 14
0.49813448249708814 15
0.4981302680539378 16
0.49812792669663214 17
0.4981266259425734 18
};

\addplot[
color=red,
mark=diamond,
mark size=1.2pt
]
table {
0.35 0
0.365 1
0.365 2
0.365 3
0.365 4
0.365 5
0.365 6
0.365 7
0.365 8
0.365 9
0.365 10
0.365 11
0.365 12
0.365 13
0.365 14
0.365 15
0.365 16
0.365 17
0.365 18
};

\addplot[
color=red,
mark=diamond,
mark size=1.2pt
]
table {
0.38 0
0.365 1
0.365 2
0.365 3
0.365 4
0.365 5
0.365 6
0.365 7
0.365 8
0.365 9
0.365 10
0.365 11
0.365 12
0.365 13
0.365 14
0.365 15
0.365 16
0.365 17
0.365 18
};

\addplot[
color=blue,
mark=o,
mark size=1pt
]
table {
0.45 0
0.5311111111111111 1
0.5178703703703703 2
0.5090946502057613 3
0.5042192501143119 4
0.501510694507951 5
0.5000059413933061 6
0.49916996744072567 7
0.49870553746706986 8
0.49844752081503885 9
0.49830417823057715 10
0.49822454346143175 11
0.49818030192301765 12
0.4981557232905654 13
0.49814206849475856 14
0.49813448249708814 15
0.4981302680539378 16
0.49812792669663214 17
0.4981266259425734 18
};

\addplot[
color=blue,
mark=o,
mark size=1pt
]
table {
0.55 0
0.5311111111111111 1
0.5178703703703703 2
0.5090946502057613 3
0.5042192501143119 4
0.501510694507951 5
0.5000059413933061 6
0.49916996744072567 7
0.49870553746706986 8
0.49844752081503885 9
0.49830417823057715 10
0.49822454346143175 11
0.49818030192301765 12
0.4981557232905654 13
0.49814206849475856 14
0.49813448249708814 15
0.4981302680539378 16
0.49812792669663214 17
0.4981266259425734 18
};

\addplot[
color=red,
mark=diamond,
mark size=1.2pt
]
table {
0.58 0
0.565 1
0.5774999999999999 2
0.5774999999999999 3
0.5774999999999999 4
0.5774999999999999 5
0.5774999999999999 6
0.5774999999999999 7
0.5774999999999999 8
0.5774999999999999 9
0.5774999999999999 10
0.5774999999999999 11
0.5774999999999999 12
0.5774999999999999 13
0.5774999999999999 14
0.5774999999999999 15
0.5774999999999999 16
0.5774999999999999 17
0.5774999999999999 18
};

\addplot[
color=red,
mark=diamond,
mark size=1.2pt
]
table {
0.67 0
0.685 1
0.685 2
0.685 3
0.685 4
0.685 5
0.685 6
0.685 7
0.685 8
0.685 9
0.685 10
0.685 11
0.685 12
0.685 13
0.685 14
0.685 15
0.685 16
0.685 17
0.685 18
};

\addplot[
color=blue,
mark=o,
mark size=1pt
]
table {
0.7 0
0.5311111111111111 1
0.5178703703703703 2
0.5090946502057613 3
0.5042192501143119 4
0.501510694507951 5
0.5000059413933061 6
0.49916996744072567 7
0.49870553746706986 8
0.49844752081503885 9
0.49830417823057715 10
0.49822454346143175 11
0.49818030192301765 12
0.4981557232905654 13
0.49814206849475856 14
0.49813448249708814 15
0.4981302680539378 16
0.49812792669663214 17
0.4981266259425734 18
};

\addplot[
color=blue,
mark=o,
mark size=1pt
]
table {
0.8 0
0.59 1
0.5178703703703703 2
0.5090946502057613 3
0.5042192501143119 4
0.501510694507951 5
0.5000059413933061 6
0.49916996744072567 7
0.49870553746706986 8
0.49844752081503885 9
0.49830417823057715 10
0.49822454346143175 11
0.49818030192301765 12
0.4981557232905654 13
0.49814206849475856 14
0.49813448249708814 15
0.4981302680539378 16
0.49812792669663214 17
0.4981266259425734 18
};

% if you have the file, you can do
% \addplot table {datafile.csv};
\end{axis}
\end{tikzpicture}}
    \caption{An example of a population of five open-minded (blue) and four close-minded (red) agents, showing corresponding graph and opinion dynamics. In the graph, the black arrows denote the clique formed by open-minded agents and the blue-dotted arrows denote incoming arrows from all the five open-minded vertices.}
    \label{fig:closeOpenExample_plotGraph}
\end{figure}

\begin{proof}
\label{proof_newOpen}
Let's suppose $A_c$ be the set of close-minded agents, each with confidence interval $\epsilon_{c}$. $A_o$ be the set of open-minded agents, each with confidence interval $\epsilon_{o}$. They are randomly distributed over the opinion spectrum $\left[0,1\right]$. The confidence intervals of each open-minded agent $A\in A_o$ (one on each side) combinedly covers almost the whole opinion spectrum. So, it is very likely for another randomly-placed agent to be within the confidence interval of an open-minded agent $A\in A_o$ (high out-degree). Similarly, it is very likely for an open-minded agent $A\in A_o$ to be in another randomly-placed open-minded agent's confidence interval, given that the latter's confidence interval covers almost the whole opinion spectrum (high in-degree). As such, open-minded agents influence and are simultaneously influenced by other open-minded agents. As a result, vertices corresponding to open-minded agents form a strongly-connected component, if not a clique (as for example the blue vertices in Figure~\ref{fig:closeOpenExample_graph}). 

On the other hand, given that $\epsilon_{c}\approx 0$, it is very unlikely for another randomly-placed agent to be within the confidence interval of an agent $A\in A_c$. These vertices will have low out-degrees and will be found as pendant in-vertices attached to the strongly-connected open-minded component (as for example the red vertices in Figure~\ref{fig:closeOpenExample_graph}). Being a strongly-connected open-minded component, the open-minded agents will come to a majoritarian consensus amongst themselves at a exponentially fast pace. On the other hand, the close-minded agents are either already in equilibrium or left behind by the faster converging open-minded component. This is exemplified in Figure~\ref{fig:closeOpenExample}, where the close-minded (red) agents' cluster (echo-chamber) is left as-is and the open-minded (blue) agents form a single cluster at the $t=2$. So, the open-minded agents are not useful in reducing the existing echo-chambers formed by the close-minded agents, because of the huge difference in their corresponding time to reach equilibrium. 

\end{proof}

\vspace{2mm}

\begin{claim}
\textit{In a population containing mixture of close-minded and open-minded people, introducing moderate-minded people will help reduce the existing echo-chambers.} \end{claim}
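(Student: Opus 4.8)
The plan is to exploit the one structural feature that distinguishes moderate-minded agents from both extremes: a confidence interval $\epsilon_{moderate}\in[0.17,0.22]$ that is (a) wide enough to straddle two neighbouring close-minded echo-chambers, yet (b) associated with the anomalously \emph{slow} convergence seen in Figure~\ref{fig:epoch_vs_openness}, where $t_{eqm}$ spikes precisely on $[0.17,0.22]$. First I would set up the graph $\mathcal{G}(\mathcal{V},E,t)$ of the mixed population exactly as in the justification of the second claim, so that the close-minded clusters appear as pendant vertices while the open-minded agents form a fast-converging strongly-connected component. I would then insert a moderate agent $m$ whose initial opinion lies between two close-minded clusters whose centres are separated by less than $\epsilon_{moderate}$. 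Because $|x_c-x_m(t)|\leq\epsilon_{moderate}$ for close-minded agents $c$ on either side, by Eqn.~\ref{eqn_HK_neighborsDef} both clusters enter $N_m(t)$, giving the vertex $v_m$ outgoing edges to both; this is the bridging (path-existence) that the problem formulation flags as the \emph{necessary} condition for two previously disconnected clusters to reach a common opinion.

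The core of the argument is to upgrade this necessary condition into an actual merge, and here the slow-convergence property does the work. I would track the modified H-K update of Eqn.~\ref{eqn_HK_ownWeighted} for $m$: being pulled simultaneously toward both clusters, $m$ drifts to an intermediate opinion and --- crucially, unlike the open-minded component which in the second claim raced to its own internal consensus and then froze, decoupled from the close-minded vertices --- the moderate agent lingers in the intermediate region over many timesteps. During this residence its opinion repeatedly lands within the tiny window $\epsilon_{close}\approx 0$ of the close-minded agents, so by Eqn.~\ref{eqn_HK_ownWeighted} each close-minded agent $c$ (now with $m\in N_c(t)$) is nudged toward $x_m(t)$. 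The slow motion of $m$ is exactly what lets the close-minded agents \emph{track} it rather than be left behind: $m$ effectively tows the two flanking clusters inward until the surviving cluster opinions coincide, collapsing two echo-chambers into one.

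I would finish with a counting step and a worked illustration. The counting step observes that each successfully bridged pair reduces $C_{eqm}$ by one relative to the close-minded-only or close-plus-open baselines of the first two claims, so a suitably placed family of moderate agents monotonically lowers the cluster count. To match the evidentiary style of the preceding justifications, I would accompany this with an opinion-time trajectory plot (an analogue of Figure~\ref{fig:closeOpenExample}) showing the moderate agent's slow intermediate drift together with the convergence of the flanking close-minded trajectories, thereby visually contrasting the \emph{success} here with the \emph{failure} of the open-minded case.

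The main obstacle is the towing step. The dragging argument hinges on a delicate race between the speed of $m$ and the near-zero tolerance $\epsilon_{close}$: I must argue that $m$ moves slowly enough (consistent with the $t_{eqm}$ spike) to stay within $\epsilon_{close}$ of its close-minded followers, while the followers are themselves extremely reluctant to move, having small $|N_c(t)\setminus\{c\}|$ and $w_{own}$ close to $1$. Making this quantitative in the limit $\epsilon_{close}\to 0$ --- rather than merely illustrative --- is the genuinely hard part, and I expect the justification, like those of the first two claims, to lean on the empirical convergence-time curves of Figure~\ref{fig:epochcluster_vs_openness} together with the bridging graph argument rather than on a fully closed-form estimate.
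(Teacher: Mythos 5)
Your proposal takes a genuinely different route from the paper's justification, and the difference matters. The paper's mechanism is mediated entirely through the open-minded agents: moderate-minded vertices, having out-degree high enough to attach integrally to the open-minded strongly-connected component but not high enough to join its clique, \emph{distort the clique structure} and thereby destroy the exponentially fast internal consensus of the open-minded agents; it is the now slowly-converging open-minded component that gets the time to drag the close-minded agents into a combined majoritarian consensus. The paper then supports the link ``slower convergence $\Rightarrow$ fewer clusters'' empirically via the negative correlation visible in Figures~\ref{fig:closeToModerate_numEpochsClusters}, \ref{fig:openToModerate_numEpochsClusters} and \ref{fig:optimalPlcmnt_numEpochsClusters}. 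Your mechanism instead has the moderate agent acting \emph{directly} on the close-minded clusters, bridging two of them and towing them together, with the open-minded agents playing no role in the core argument.

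The towing step, which you correctly flag as the hard part, is where your route genuinely breaks. A moderate agent $m$ placed between two close-minded clusters separated by roughly $0.1$--$0.2$ holds a single opinion $x_m(t)$, and by Eqn.~\ref{eqn_HK_neighborsDef} a close-minded agent $c$ is influenced by $m$ only when $|x_m(t)-x_c(t)|\leq\epsilon_{close}\approx 0.01$. So $m$ cannot be inside the confidence windows of both flanking clusters at once, and for most of its drift it is inside neither: the edges you need are $(v_c\to v_m)$, and those are governed by $\epsilon_{close}$, not by $\epsilon_{moderate}$. The outgoing edges from $v_m$ to both clusters give $m$ a pull toward their mean, but that is influence \emph{on} $m$, not influence \emph{by} $m$; path existence from $c_1$ to $c_2$ through $m$ is what you would need, and it is absent except in the fleeting moments when $x_m(t)$ passes within $\epsilon_{close}$ of a cluster. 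Moreover, in the mixed population the claim actually concerns, $m$ sits inside the confidence interval of essentially every open-minded agent and is itself strongly pulled toward the open-minded consensus, which your drift analysis omits. The paper's route sidesteps all of this because the agents doing the dragging are the open-minded ones, whose own slowed trajectories sweep through the close-minded positions; the moderates' job is only to cause that slowdown. If you want to keep your direct-bridging picture, you would need to quantify the race you identify (drift speed of $m$ versus $\epsilon_{close}$ and $w_{own}$) and account for the open-minded pull on $m$; as written, the merge of two close-minded clusters does not follow.
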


\begin{proof}
In presence of close- and open-minded agents in a society, we will demonstrate that only moderate-minded agents have the capacity to bring down the overall number of clusters (echo chambers) at equilibrium. Let's suppose $A_c$ is the set of close-minded agents, each with confidence interval $\epsilon_{c}$. $A_o$ is the set of open-minded agents, each with confidence interval $\epsilon_{o}$. $A_m$ is the set of moderate-minded agents, each with confidence interval $\epsilon_{m}$. They are randomly distributed over the opinion spectrum $\left[0,1\right]$. As we discussed in Justification~\ref{proof_newOpen}, open-minded vertices form a strongly-connected component, close to a clique. Close-minded vertices will stay attached as pendant in-vertices to the strongly-connected open-minded components. So, this will not distort the clique structure much, \changepj{as compared to the case where new moderate-minded agents are introduced}. The moderate-minded vertices by virtue of its relatively higher out-degree than their close-minded counterparts, will stay more integrally attached to the strongly-connected open-minded components, \changepj{but not connected to every other agent in the existing clique}. As such, they will distort the clique structure. Thus, the fast-paced convergence among the open-minded agents that was happening by virtue of the clique structure, can no longer be achieved. And now, when the open-minded agents are converging at a slower pace, they will have the opportunity to gradually drag the slowly-converging or already-converged close-minded agents together into a local combined majoritarian consensus (bigger echo chamber) -- effectively giving rise to a lesser number of opinion clusters. 

\changepj{The relationship among the number of clusters at equilibrium and slowing down of the convergence process, is also validated by each of the plots in our experimental section viz., Figures~\ref{fig:closeToModerate_numEpochsClusters}, \ref{fig:openToModerate_numEpochsClusters} and \ref{fig:optimalPlcmnt_numEpochsClusters}. That is, on an average, there is a negative correlation between the time to reach equilibrium for a heterogeneous population and the number of clusters. Thus, slowing down the convergence process can eventually lead to lesser number of echo-chambers. Here, the moderate-minded agents act as a bridge between the slowly-converging/already-converged close-minded agents and the fast-converging open-minded agents, and are capable of slowing down the overall process of reaching opinion equilibrium in the population, resulting in a fewer number of opinion clusters.}

\end{proof}

%==========SECTION===============
\section{Intelligent Placement of Moderate-Minded Agents}
\label{sec:optimalModAgentPlacement}

\begin{algorithm}[h!]
\small
\caption{An algorithm to place moderate-minded agents at optimal space-time positions}\label{alg:optPlcmnt}
\KwData{Initial opinion profile $ x(t=0)$, Openness profile $\mathcal{E}$, Budget $\mathcal{B}$}
\KwResult{Space-Time positions of optimal placement of moderate-minded agents}

$t \gets 0$\;
\While{$\mathcal{B} > 0$}{
    Generate graph $\mathcal{G}(\mathcal{V}, E,t)$\; \Comment{initial at t=0, opinion dynamics for t>0}
    Arrange $v_i\in\mathcal{V}$ wrt corresponding opinion $x_i(t)$\;
    \For{$v_i\in\mathcal{V}$}{
        $E^{out}_i\gets$ set of outgoing edges from $v_i$\;
        $N^{L}_i\gets\{x_i(t)-x_k(t)|\; (v_i,v_k)\in E^{out}_i, x_i(t)<x_k(t) \}$\;
        $N^{R}_i\gets\{x_k(t)-x_i(t)|\; (v_i,v_k)\in E^{out}_i, x_i(t)>x_k(t) \}$\;
    }
    \For{$i = 1;\ i < \left|\mathcal{V}\right|;\ i = i + 1$}{
    \If{$\left(v_i,v_{i+1} \;\;\textit{are open-minded}\right)$ \textup{\textbf{and}} $\left(\sum{N^{L}_i} < \sum{N^{R}_i}\right)$ \textup{\textbf{and}} $\left(\sum{N^{L}_{i+1}} > \sum{N^{R}_{i+1}}\right)$}{
      $numNewL \gets \ceil{\frac{\sum{N^{R}_i}-\sum{N^{L}_i}}{\mathcal{E}_i}}$\;
      $numNewR \gets \ceil{\frac{\sum{N^{L}_{i+1}}-\sum{N^{R}_{i+1}}}{\mathcal{E}_{i+1}}}$\;
        \eIf{$\mathcal{B} \geq numNewL$}{
        Introduce $numNewL$ moderate-minded agents with opinion = $x_i(t) - \mathcal{E}_i$, at time = $t$\;
        $\mathcal{B} \gets \mathcal{B}-numNewL$\;
        }{\textbf{break}\;
        }
        \eIf{$\mathcal{B} \geq numNewR$}{
        Introduce $numNewR$ moderate-minded agents with opinion = $x_{i+1}(t) + \mathcal{E}_{i+1}$, at time = $t$\;
        $\mathcal{B} \gets \mathcal{B}-numNewR$\;
        }{\textbf{break}\;
        }      
    }
    }
    $t \gets t+1$\;
}
\end{algorithm}

It is understood that both close- and open-minded agents will not be helpful in reducing existing echo chambers in the community -- only moderate-minded agents have the power to bring this change. They have the potential to slow down the fast convergence of the open-minded agents and thus, bring about the chance for close-minded agents to interact with them. But, to slow down the convergence of the open-minded agents, the questions that arise are: How many external moderate-minded agents to place in the community? Where to place them in the opinion spectrum? And at what time instances during the dynamics of existing opinions, do they need to be introduced?

Let's suppose that there is a budget $\mathcal{B}$ of the maximum number of external moderate-minded agents that we can introduce in the population. Our motive is to place moderate-minded agents in such a way that they slow down (maximally) the reaching of consensus among the open-minded agents. At each timestep, this can be done by identifying a pair of consecutive (no other agent should have opinion in between them) open-minded agents $i, i+1$ who are moving towards each other. Thereafter, we can restrict/reverse this movement by introducing new moderate-minded agents. First, movement towards one another can happen only when the neighborhood of both the agent are not exactly same (if those were exactly same, both would have faced the exact same force and not opposite). So, even if their neighborhood coincide to some extent i.e. they have some common outgoing edges (in graph representation), they cannot be exactly equal.

 In graph representation, for the left agent $i$ of such a pair, the sum of opinions of the right neighbors should be more than the sum of opinions of left neighbors (because of this, it experiences a rightward pull by the opinion dynamics formulation in Eqn.~\ref{eqn_HK}). Similarly, the right agent $i+1$ of the pair will experience a leftward pull because sum of opinions of the left neighbors is more than the sum of opinions of right neighbors. To counter this attractive force, we introduce a leftward (rightward) pull to the left (right) agent of such a pair. To pull the left agent $i$ leftwards, we introduce a minimum number of agents to reverse its movement, at such a position that their $x$ is least to become neighbors of the former. To pull the right agent $i+1$ rightwards, we introduce a minimum number of agents to reverse its movement, at such a position that their $x$ is highest to become neighbors of the former. We continue doing this at each timestep until our budget of moderate-minded agents is covered up. The entire process is elucidated in Algorithm~\ref{alg:optPlcmnt}.

%==========SECTION===============
\section{Experimental Results}
\label{sec:experiments}
We now study the impact of introducing moderate agents in heterogeneous populations, both randomly and intelligently. 

\subsection{Effect of Random Placement of Moderate-Minded Agents in the Community}
\label{subsec:randomModAgentPlacement}
Let's suppose there is a population of $200$ people. Complying with real-life, we design it to be a heterogeneous population, with respect to the openness i.e. $\epsilon$-neighborhood of the people. Instead of every person having different openness, for simplicity, we consider that there are two homogeneous subgroups: an open-minded subgroup and a close-minded subgroup. Each person belonging to the close-minded subgroup has the same openness i.e. $\epsilon_c=0.01$, similarly every open-minded person has $\epsilon_o=0.45$. To emulate real-life scenario more closely, we do not differentiate on the initial opinion of close- and open-minded agents. It is not necessary that close-minded agents will hold extremist opinions (i.e. close to $x=0$ or $x=1$); they can very well have opinions somewhere near the central region. On a similar note, open-minded agents also will not always possess centrist opinions. So, we sample the initial opinion profiles, $x_o(t=0)$ and $x_c(t=0)$ of both the open- and close-minded subgroup respectively from a normal distribution $\mathcal{N}(0.5,\frac{0.5}{4})$ centred at $0.5$ and clipped between $0$ and $1$. The initial opinion profile $x(t=0)$ of the overall population is thus a concatenation of the respective profiles, $x_o(t=0)$ and $x_c(t=0)$. We now study the effect of converting existing agents to moderate-minded or introducing new moderate-minded agents, each of whom has $\epsilon_m=0.2$.

\vspace{2mm}

\subsubsection{A majorly close-minded initial population}

First, we consider a population that is comprised of $80\%$ close-minded agents and $20\%$ open-minded agents. Effectively, the average openness of this majorly close-minded population is $(0.8\times\epsilon_c)+(0.2\times\epsilon_o)=(0.8\times0.01)+(0.2\times0.45)=0.098$. With this initial configuration, we systematically go on replacing close-minded agents with moderate-minded agents until all of the former are removed. The initial opinion of all the close-minded agents are fixed. In Figure~\ref{fig:closeToModerate_numEpochsClusters}, we study (a) the time to reach equilibrium and (b) the number of clusters formed, as more and more close-minded people are replaced by moderate-minded. Since the initial positions of the moderate minded-agents are random in the opinion spectrum, we try out five different runs (Run \#1 - Run \#5) and take the average to understand the general trend. In each run, $x_o(t=0)$ is same. But each run corresponds to a different initial opinion profile for the moderate-minded agents $x_m(t=0)\subseteq x_c(t=0)$, and the remaining non-replaced close-minded agents $x_{cRem}(t=0)= x_c(t=0)\setminus x_m(t=0)$. The common trend among all these runs is that there is an increase in the time to reach equilibrium because of more moderate-minded agents in the population, irrespective of their initial position. In fact, when more close-minded people are replaced by moderate-minded agents, the average openness of the population increases. But unlike the general trend in a homogeneous population (Figure~\ref{fig:epoch_vs_openness}, where there's a decrease in equilibrium time with increase in openness), here the time increases with increase in average openness of the population. \changeng{This is with the exception that, in absence of any close-minded agents, the time to reach equilibrium and number of clusters both fall sharply.} In this increased time, larger clusters get formed and the overall number of clusters decreases. 

\begin{figure}[h!]
    \centering
    \subfloat[\label{fig:closeToModerate_numEpochs}]{\pgfplotsset{width=\columnwidth,height=0.28\textwidth,compat=1.9}\begin{tikzpicture}
\begin{axis}[
xlabel={Fraction of close-minded transformed to moderate-minded},
ylabel={Time to equilibrium ($t_{eqm}$)},
xmin=0, xmax=1.05,
ymin=0, ymax=1000,
xtick={0,0.1,0.2,0.3,0.4,0.5,0.6,0.7,0.8,0.9,1.0},
ytick={0,200,400,600,800,1000},
xticklabel style={rotate=60},
legend style={at={(0,1)},anchor=north west,font=\footnotesize},
ymajorgrids=true,
grid style=dashed
%xmode=log
%ylabel style={rotate=-90},
%ymode=log,
%log basis y={10}
]

\addlegendentry{Run \#1}
% proposed
\addplot[
color=red,
mark=diamond,
mark size=0.6pt
]
coordinates {
(0,15)
(0.01010101,23)
(0.03030303,19)
(0.050505051,18)
(0.070707071,20)
(0.090909091,18)
(0.111111111,39)
(0.131313131,37)
(0.151515152,20)
(0.171717172,38)
(0.191919192,37)
(0.212121212,28)
(0.232323232,30)
(0.252525253,29)
(0.272727273,31)
(0.292929293,30)
(0.313131313,30)
(0.333333333,29)
(0.353535354,41)
(0.373737374,37)
(0.393939394,41)
(0.414141414,39)
(0.434343434,41)
(0.454545455,50)
(0.474747475,45)
(0.494949495,40)
(0.515151515,42)
(0.535353535,42)
(0.555555556,57)
(0.575757576,49)
(0.595959596,50)
(0.616161616,52)
(0.636363636,56)
(0.656565657,278)
(0.676767677,265)
(0.696969697,244)
(0.717171717,284)
(0.737373737,285)
(0.757575758,314)
(0.777777778,344)
(0.797979798,405)
(0.818181818,375)
(0.838383838,368)
(0.858585859,353)
(0.878787879,539)
(0.898989899,708)
(0.919191919,728)
(0.939393939,229)
(0.95959596,248)
(0.97979798,186)
(1,9)
};

\addlegendentry{Run \#2}
% 100 agents
\addplot[
color=LimeGreen,
mark=triangle,
mark size=0.5pt
]
coordinates {
(0,15)
(0.01010101,19)
(0.03030303,41)
(0.050505051,43)
(0.070707071,26)
(0.090909091,31)
(0.111111111,23)
(0.131313131,33)
(0.151515152,29)
(0.171717172,24)
(0.191919192,25)
(0.212121212,49)
(0.232323232,35)
(0.252525253,24)
(0.272727273,29)
(0.292929293,28)
(0.313131313,25)
(0.333333333,31)
(0.353535354,22)
(0.373737374,25)
(0.393939394,36)
(0.414141414,43)
(0.434343434,71)
(0.454545455,34)
(0.474747475,37)
(0.494949495,60)
(0.515151515,84)
(0.535353535,83)
(0.555555556,56)
(0.575757576,99)
(0.595959596,78)
(0.616161616,85)
(0.636363636,99)
(0.656565657,131)
(0.676767677,111)
(0.696969697,118)
(0.717171717,124)
(0.737373737,371)
(0.757575758,382)
(0.777777778,409)
(0.797979798,382)
(0.818181818,133)
(0.838383838,429)
(0.858585859,148)
(0.878787879,213)
(0.898989899,168)
(0.919191919,253)
(0.939393939,860)
(0.95959596,1001)
(0.97979798,806)
(1,9)
};

\addlegendentry{Run \#3}
% 50 agents
\addplot[
color=ProcessBlue,
mark=square,
mark size=0.5pt
]
coordinates {
(0,15)
(0.01010101,19)
(0.03030303,22)
(0.050505051,21)
(0.070707071,22)
(0.090909091,22)
(0.111111111,50)
(0.131313131,28)
(0.151515152,28)
(0.171717172,30)
(0.191919192,42)
(0.212121212,29)
(0.232323232,34)
(0.252525253,29)
(0.272727273,33)
(0.292929293,34)
(0.313131313,29)
(0.333333333,33)
(0.353535354,37)
(0.373737374,36)
(0.393939394,32)
(0.414141414,34)
(0.434343434,31)
(0.454545455,29)
(0.474747475,29)
(0.494949495,30)
(0.515151515,37)
(0.535353535,35)
(0.555555556,41)
(0.575757576,48)
(0.595959596,62)
(0.616161616,67)
(0.636363636,63)
(0.656565657,59)
(0.676767677,50)
(0.696969697,62)
(0.717171717,84)
(0.737373737,71)
(0.757575758,71)
(0.777777778,93)
(0.797979798,99)
(0.818181818,76)
(0.838383838,91)
(0.858585859,104)
(0.878787879,123)
(0.898989899,140)
(0.919191919,177)
(0.939393939,177)
(0.95959596,229)
(0.97979798,386)
(1,9)
};

\addlegendentry{Run \#4}
% 25 agents
\addplot[
color=blue,
mark=+,
mark size=0.8pt
]
coordinates {
(0,15)
(0.01010101,17)
(0.03030303,18)
(0.050505051,21)
(0.070707071,22)
(0.090909091,25)
(0.111111111,22)
(0.131313131,44)
(0.151515152,47)
(0.171717172,33)
(0.191919192,29)
(0.212121212,27)
(0.232323232,27)
(0.252525253,27)
(0.272727273,27)
(0.292929293,26)
(0.313131313,38)
(0.333333333,44)
(0.353535354,37)
(0.373737374,50)
(0.393939394,45)
(0.414141414,44)
(0.434343434,51)
(0.454545455,49)
(0.474747475,47)
(0.494949495,53)
(0.515151515,54)
(0.535353535,59)
(0.555555556,59)
(0.575757576,68)
(0.595959596,54)
(0.616161616,57)
(0.636363636,74)
(0.656565657,41)
(0.676767677,44)
(0.696969697,51)
(0.717171717,57)
(0.737373737,65)
(0.757575758,71)
(0.777777778,65)
(0.797979798,74)
(0.818181818,96)
(0.838383838,102)
(0.858585859,148)
(0.878787879,193)
(0.898989899,215)
(0.919191919,379)
(0.939393939,226)
(0.95959596,225)
(0.97979798,1001)
(1,9)
};

\addlegendentry{Run \#5}
% 12 agents
\addplot[
color=orange,
mark=o,
mark size=0.5pt
]
coordinates {
(0,15)
(0.01010101,27)
(0.03030303,30)
(0.050505051,29)
(0.070707071,23)
(0.090909091,79)
(0.111111111,80)
(0.131313131,25)
(0.151515152,35)
(0.171717172,33)
(0.191919192,42)
(0.212121212,32)
(0.232323232,57)
(0.252525253,29)
(0.272727273,29)
(0.292929293,30)
(0.313131313,25)
(0.333333333,30)
(0.353535354,29)
(0.373737374,25)
(0.393939394,28)
(0.414141414,30)
(0.434343434,30)
(0.454545455,30)
(0.474747475,32)
(0.494949495,33)
(0.515151515,48)
(0.535353535,51)
(0.555555556,49)
(0.575757576,54)
(0.595959596,58)
(0.616161616,42)
(0.636363636,65)
(0.656565657,66)
(0.676767677,64)
(0.696969697,66)
(0.717171717,80)
(0.737373737,91)
(0.757575758,85)
(0.777777778,83)
(0.797979798,79)
(0.818181818,130)
(0.838383838,562)
(0.858585859,193)
(0.878787879,113)
(0.898989899,131)
(0.919191919,541)
(0.939393939,286)
(0.95959596,844)
(0.97979798,1001)
(1,9)
};

\addlegendentry{Average}
% 12 agents
\addplot[
color=black,
line width=2pt
]
coordinates {
(0,15)
(0.01010101,21)
(0.03030303,26)
(0.050505051,26.4)
(0.070707071,22.6)
(0.090909091,35)
(0.111111111,42.8)
(0.131313131,33.4)
(0.151515152,31.8)
(0.171717172,31.6)
(0.191919192,35)
(0.212121212,33)
(0.232323232,36.6)
(0.252525253,27.6)
(0.272727273,29.8)
(0.292929293,29.6)
(0.313131313,29.4)
(0.333333333,33.4)
(0.353535354,33.2)
(0.373737374,34.6)
(0.393939394,36.4)
(0.414141414,38)
(0.434343434,44.8)
(0.454545455,38.4)
(0.474747475,38)
(0.494949495,43.2)
(0.515151515,53)
(0.535353535,54)
(0.555555556,52.4)
(0.575757576,63.6)
(0.595959596,60.4)
(0.616161616,60.6)
(0.636363636,71.4)
(0.656565657,115)
(0.676767677,106.8)
(0.696969697,108.2)
(0.717171717,125.8)
(0.737373737,176.6)
(0.757575758,184.6)
(0.777777778,198.8)
(0.797979798,207.8)
(0.818181818,162)
(0.838383838,310.4)
(0.858585859,189.2)
(0.878787879,236.2)
(0.898989899,272.4)
(0.919191919,415.6)
(0.939393939,355.6)
(0.95959596,509.4)
(0.97979798,676)
(1,9)
};

\end{axis}
\end{tikzpicture}}\\
    \subfloat[\label{fig:closeToModerate_numClusters}]{\pgfplotsset{width=\columnwidth,height=0.28\textwidth,compat=1.9}\begin{tikzpicture}
\begin{axis}[
xlabel={Fraction of close-minded transformed to moderate-minded},
ylabel={\#(clusters) at equilibrium ($C_{eqm}$)},
xmin=0, xmax=1.05,
ymin=0, ymax=45,
xtick={0,0.1,0.2,0.3,0.4,0.5,0.6,0.7,0.8,0.9,1.0},
ytick={0,15,30,45},
xticklabel style={rotate=60},
legend style={at={(0,0)},anchor=south west,font=\footnotesize},
ymajorgrids=true,
grid style=dashed
%xmode=log
%ylabel style={rotate=-90},
%ymode=log,
%log basis y={10}
]

\addlegendentry{Run \#1}
% proposed
\addplot[
color=red,
mark=diamond,
mark size=0.6pt
]
coordinates {
(0,42)
(0.01010101,41)
(0.03030303,40)
(0.050505051,42)
(0.070707071,44)
(0.090909091,45)
(0.111111111,44)
(0.131313131,43)
(0.151515152,43)
(0.171717172,44)
(0.191919192,42)
(0.212121212,41)
(0.232323232,43)
(0.252525253,43)
(0.272727273,41)
(0.292929293,44)
(0.313131313,43)
(0.333333333,42)
(0.353535354,40)
(0.373737374,40)
(0.393939394,44)
(0.414141414,40)
(0.434343434,40)
(0.454545455,38)
(0.474747475,39)
(0.494949495,35)
(0.515151515,34)
(0.535353535,35)
(0.555555556,35)
(0.575757576,36)
(0.595959596,34)
(0.616161616,32)
(0.636363636,30)
(0.656565657,17)
(0.676767677,16)
(0.696969697,17)
(0.717171717,16)
(0.737373737,16)
(0.757575758,14)
(0.777777778,14)
(0.797979798,15)
(0.818181818,12)
(0.838383838,7)
(0.858585859,6)
(0.878787879,9)
(0.898989899,9)
(0.919191919,8)
(0.939393939,11)
(0.95959596,8)
(0.97979798,5)
(1,1)
};

\addlegendentry{Run \#2}
% 100 agents
\addplot[
color=LimeGreen,
mark=triangle,
mark size=0.5pt
]
coordinates {
(0,42)
(0.01010101,41)
(0.03030303,41)
(0.050505051,41)
(0.070707071,40)
(0.090909091,42)
(0.111111111,40)
(0.131313131,41)
(0.151515152,40)
(0.171717172,41)
(0.191919192,40)
(0.212121212,37)
(0.232323232,37)
(0.252525253,38)
(0.272727273,37)
(0.292929293,37)
(0.313131313,38)
(0.333333333,36)
(0.353535354,38)
(0.373737374,37)
(0.393939394,37)
(0.414141414,36)
(0.434343434,34)
(0.454545455,36)
(0.474747475,32)
(0.494949495,28)
(0.515151515,30)
(0.535353535,27)
(0.555555556,32)
(0.575757576,25)
(0.595959596,29)
(0.616161616,30)
(0.636363636,29)
(0.656565657,28)
(0.676767677,28)
(0.696969697,26)
(0.717171717,25)
(0.737373737,19)
(0.757575758,18)
(0.777777778,18)
(0.797979798,18)
(0.818181818,23)
(0.838383838,13)
(0.858585859,19)
(0.878787879,17)
(0.898989899,14)
(0.919191919,11)
(0.939393939,6)
(0.95959596,5)
(0.97979798,5)
(1,1)
};

\addlegendentry{Run \#3}
% 50 agents
\addplot[
color=ProcessBlue,
mark=square,
mark size=0.5pt
]
coordinates {
(0,42)
(0.01010101,41)
(0.03030303,42)
(0.050505051,43)
(0.070707071,42)
(0.090909091,42)
(0.111111111,39)
(0.131313131,42)
(0.151515152,40)
(0.171717172,42)
(0.191919192,40)
(0.212121212,40)
(0.232323232,40)
(0.252525253,37)
(0.272727273,37)
(0.292929293,40)
(0.313131313,40)
(0.333333333,40)
(0.353535354,41)
(0.373737374,39)
(0.393939394,41)
(0.414141414,42)
(0.434343434,40)
(0.454545455,40)
(0.474747475,40)
(0.494949495,39)
(0.515151515,35)
(0.535353535,37)
(0.555555556,35)
(0.575757576,31)
(0.595959596,33)
(0.616161616,33)
(0.636363636,30)
(0.656565657,30)
(0.676767677,30)
(0.696969697,28)
(0.717171717,28)
(0.737373737,25)
(0.757575758,23)
(0.777777778,21)
(0.797979798,21)
(0.818181818,19)
(0.838383838,19)
(0.858585859,17)
(0.878787879,17)
(0.898989899,17)
(0.919191919,13)
(0.939393939,10)
(0.95959596,7)
(0.97979798,5)
(1,1)
};

\addlegendentry{Run \#4}
% 25 agents
\addplot[
color=blue,
mark=+,
mark size=0.8pt
]
coordinates {
(0,42)
(0.01010101,41)
(0.03030303,43)
(0.050505051,45)
(0.070707071,44)
(0.090909091,42)
(0.111111111,44)
(0.131313131,42)
(0.151515152,40)
(0.171717172,42)
(0.191919192,40)
(0.212121212,37)
(0.232323232,36)
(0.252525253,37)
(0.272727273,38)
(0.292929293,40)
(0.313131313,35)
(0.333333333,32)
(0.353535354,34)
(0.373737374,33)
(0.393939394,33)
(0.414141414,34)
(0.434343434,32)
(0.454545455,33)
(0.474747475,31)
(0.494949495,31)
(0.515151515,28)
(0.535353535,31)
(0.555555556,32)
(0.575757576,31)
(0.595959596,32)
(0.616161616,31)
(0.636363636,29)
(0.656565657,31)
(0.676767677,31)
(0.696969697,29)
(0.717171717,28)
(0.737373737,25)
(0.757575758,22)
(0.777777778,21)
(0.797979798,23)
(0.818181818,22)
(0.838383838,19)
(0.858585859,16)
(0.878787879,15)
(0.898989899,11)
(0.919191919,9)
(0.939393939,11)
(0.95959596,8)
(0.97979798,3)
(1,1)
};

\addlegendentry{Run \#5}
% 12 agents
\addplot[
color=orange,
mark=o,
mark size=0.5pt
]
coordinates {
(0,42)
(0.01010101,41)
(0.03030303,44)
(0.050505051,43)
(0.070707071,43)
(0.090909091,41)
(0.111111111,40)
(0.131313131,38)
(0.151515152,39)
(0.171717172,38)
(0.191919192,36)
(0.212121212,37)
(0.232323232,36)
(0.252525253,38)
(0.272727273,36)
(0.292929293,36)
(0.313131313,36)
(0.333333333,33)
(0.353535354,34)
(0.373737374,34)
(0.393939394,36)
(0.414141414,35)
(0.434343434,35)
(0.454545455,34)
(0.474747475,35)
(0.494949495,36)
(0.515151515,33)
(0.535353535,32)
(0.555555556,32)
(0.575757576,31)
(0.595959596,30)
(0.616161616,31)
(0.636363636,32)
(0.656565657,31)
(0.676767677,31)
(0.696969697,28)
(0.717171717,26)
(0.737373737,24)
(0.757575758,26)
(0.777777778,26)
(0.797979798,26)
(0.818181818,22)
(0.838383838,16)
(0.858585859,21)
(0.878787879,21)
(0.898989899,18)
(0.919191919,10)
(0.939393939,11)
(0.95959596,3)
(0.97979798,3)
(1,1)
};

\addlegendentry{Average}
% 12 agents
\addplot[
color=black,
line width=2pt
]
coordinates {
(0,42)
(0.01010101,41)
(0.03030303,42)
(0.050505051,42.8)
(0.070707071,42.6)
(0.090909091,42.4)
(0.111111111,41.4)
(0.131313131,41.2)
(0.151515152,40.4)
(0.171717172,41.4)
(0.191919192,39.6)
(0.212121212,38.4)
(0.232323232,38.4)
(0.252525253,38.6)
(0.272727273,37.8)
(0.292929293,39.4)
(0.313131313,38.4)
(0.333333333,36.6)
(0.353535354,37.4)
(0.373737374,36.6)
(0.393939394,38.2)
(0.414141414,37.4)
(0.434343434,36.2)
(0.454545455,36.2)
(0.474747475,35.4)
(0.494949495,33.8)
(0.515151515,32)
(0.535353535,32.4)
(0.555555556,33.2)
(0.575757576,30.8)
(0.595959596,31.6)
(0.616161616,31.4)
(0.636363636,30)
(0.656565657,27.4)
(0.676767677,27.2)
(0.696969697,25.6)
(0.717171717,24.6)
(0.737373737,21.8)
(0.757575758,20.6)
(0.777777778,20)
(0.797979798,20.6)
(0.818181818,19.6)
(0.838383838,14.8)
(0.858585859,15.8)
(0.878787879,15.8)
(0.898989899,13.8)
(0.919191919,10.2)
(0.939393939,9.8)
(0.95959596,6.2)
(0.97979798,4.2)
(1,1)
};

\end{axis}
\end{tikzpicture}}
    \caption{Effect of transforming close-minded agents to moderate-minded, in a population of 200 agents, where initial opinion profile $x(t=0)\in\mathcal{N}(0.5,\frac{0.5}{4})$ and contains $80\%$ close-minded agents and $20\%$ open-minded agents. Irrespective of which close-minded agents are chosen to be transformed, with more and more of such transformations in the heterogeneous population, the equilibrium time is delayed and larger (and lesser number of) opinion clusters get formed.}
    \label{fig:closeToModerate_numEpochsClusters}
\end{figure}

\subsubsection{A majorly open-minded initial population}
Secondly, we consider a majorly open-minded population that is comprised of $80\%$ open-minded agents and $20\%$ close-minded agents. Unlike previous setup, this time we systematically go on replacing open-minded agents with moderate-minded agents until all of the former are removed. Here, the initial opinion of all the close-minded agents are fixed. As more and more open-minded people are replaced by moderate-minded, in Figure~\ref{fig:openToModerate_numEpochsClusters} we plot the time to reach equilibrium and the number of clusters formed. Here also, we try out five different runs (Run \#1 - Run \#5) of the experiment, where $x_c(t=0)$ is same but initial opinion profile for the moderate-minded agents $x_m(t=0)$ and the remaining non-replaced open-minded agents are different. The common trend is again similar to the previous setup. With more randomly-placed moderate-minded agents in the population, the time to reach equilibrium increases and the overall number of clusters decreases.

\begin{figure}[h!]
    \centering
    \subfloat[\label{fig:openToModerate_numEpochs}]{\pgfplotsset{width=\columnwidth,height=0.28\textwidth,compat=1.9}
                \begin{tikzpicture}
\begin{axis}[
xlabel={Fraction of open-minded transformed to moderate-minded},
ylabel={Time to equilibrium ($t_{eqm}$)},
xmin=0, xmax=1.05,
ymin=0, ymax=400,
xtick={0,0.1,0.2,0.3,0.4,0.5,0.6,0.7,0.8,0.9,1.0},
ytick={0,100,200,300,400},
xticklabel style={rotate=60},
legend style={at={(0,1)},anchor=north west,font=\footnotesize},
ymajorgrids=true,
grid style=dashed
%xmode=log
%ylabel style={rotate=-90},
%ymode=log,
%log basis y={10}
]

\addlegendentry{Run \#1}
% proposed
\addplot[
color=red,
mark=diamond,
mark size=0.6pt
]
coordinates {
(0,50)
(0.01010101,50)
(0.03030303,53)
(0.050505051,52)
(0.070707071,52)
(0.090909091,53)
(0.111111111,53)
(0.131313131,54)
(0.151515152,55)
(0.171717172,55)
(0.191919192,56)
(0.212121212,57)
(0.232323232,58)
(0.252525253,59)
(0.272727273,59)
(0.292929293,59)
(0.313131313,60)
(0.333333333,61)
(0.353535354,62)
(0.373737374,63)
(0.393939394,85)
(0.414141414,65)
(0.434343434,66)
(0.454545455,67)
(0.474747475,136)
(0.494949495,67)
(0.515151515,69)
(0.535353535,70)
(0.555555556,70)
(0.575757576,70)
(0.595959596,69)
(0.616161616,75)
(0.636363636,127)
(0.656565657,132)
(0.676767677,256)
(0.696969697,195)
(0.717171717,197)
(0.737373737,196)
(0.757575758,175)
(0.777777778,321)
(0.797979798,294)
(0.818181818,284)
(0.838383838,269)
(0.858585859,264)
(0.878787879,254)
(0.898989899,250)
(0.919191919,243)
(0.939393939,242)
(0.95959596,239)
(0.97979798,246)
(1,125)
};

\addlegendentry{Run \#2}
% 100 agents
\addplot[
color=LimeGreen,
mark=triangle,
mark size=0.5pt
]
coordinates {
(0,50)
(0.01010101,50)
(0.03030303,50)
(0.050505051,52)
(0.070707071,53)
(0.090909091,55)
(0.111111111,54)
(0.131313131,67)
(0.151515152,68)
(0.171717172,60)
(0.191919192,60)
(0.212121212,71)
(0.232323232,104)
(0.252525253,108)
(0.272727273,64)
(0.292929293,69)
(0.313131313,70)
(0.333333333,71)
(0.353535354,65)
(0.373737374,67)
(0.393939394,76)
(0.414141414,93)
(0.434343434,101)
(0.454545455,117)
(0.474747475,87)
(0.494949495,111)
(0.515151515,125)
(0.535353535,120)
(0.555555556,120)
(0.575757576,120)
(0.595959596,131)
(0.616161616,74)
(0.636363636,75)
(0.656565657,75)
(0.676767677,76)
(0.696969697,199)
(0.717171717,193)
(0.737373737,214)
(0.757575758,330)
(0.777777778,312)
(0.797979798,292)
(0.818181818,279)
(0.838383838,273)
(0.858585859,260)
(0.878787879,256)
(0.898989899,253)
(0.919191919,245)
(0.939393939,244)
(0.95959596,246)
(0.97979798,248)
(1,125)
};

\addlegendentry{Run \#3}
% 50 agents
\addplot[
color=ProcessBlue,
mark=square,
mark size=0.5pt
]
coordinates {
(0,50)
(0.01010101,50)
(0.03030303,51)
(0.050505051,51)
(0.070707071,52)
(0.090909091,53)
(0.111111111,55)
(0.131313131,54)
(0.151515152,56)
(0.171717172,56)
(0.191919192,56)
(0.212121212,55)
(0.232323232,55)
(0.252525253,56)
(0.272727273,57)
(0.292929293,57)
(0.313131313,57)
(0.333333333,58)
(0.353535354,59)
(0.373737374,60)
(0.393939394,61)
(0.414141414,63)
(0.434343434,64)
(0.454545455,64)
(0.474747475,66)
(0.494949495,68)
(0.515151515,68)
(0.535353535,68)
(0.555555556,69)
(0.575757576,71)
(0.595959596,65)
(0.616161616,71)
(0.636363636,71)
(0.656565657,73)
(0.676767677,96)
(0.696969697,78)
(0.717171717,75)
(0.737373737,70)
(0.757575758,83)
(0.777777778,113)
(0.797979798,84)
(0.818181818,83)
(0.838383838,268)
(0.858585859,266)
(0.878787879,262)
(0.898989899,254)
(0.919191919,242)
(0.939393939,241)
(0.95959596,238)
(0.97979798,245)
(1,125)
};

\addlegendentry{Run \#4}
% 25 agents
\addplot[
color=blue,
mark=+,
mark size=0.8pt
]
coordinates {
(0,50)
(0.01010101,50)
(0.03030303,51)
(0.050505051,51)
(0.070707071,52)
(0.090909091,52)
(0.111111111,53)
(0.131313131,54)
(0.151515152,55)
(0.171717172,55)
(0.191919192,56)
(0.212121212,57)
(0.232323232,59)
(0.252525253,59)
(0.272727273,60)
(0.292929293,77)
(0.313131313,62)
(0.333333333,62)
(0.353535354,64)
(0.373737374,83)
(0.393939394,87)
(0.414141414,87)
(0.434343434,113)
(0.454545455,96)
(0.474747475,93)
(0.494949495,121)
(0.515151515,118)
(0.535353535,114)
(0.555555556,116)
(0.575757576,118)
(0.595959596,122)
(0.616161616,70)
(0.636363636,73)
(0.656565657,74)
(0.676767677,74)
(0.696969697,77)
(0.717171717,68)
(0.737373737,68)
(0.757575758,81)
(0.777777778,303)
(0.797979798,288)
(0.818181818,277)
(0.838383838,98)
(0.858585859,261)
(0.878787879,254)
(0.898989899,249)
(0.919191919,246)
(0.939393939,240)
(0.95959596,239)
(0.97979798,248)
(1,125)
};

\addlegendentry{Run \#5}
% 12 agents
\addplot[
color=orange,
mark=o,
mark size=0.5pt
]
coordinates {
(0,50)
(0.01010101,50)
(0.03030303,51)
(0.050505051,52)
(0.070707071,49)
(0.090909091,49)
(0.111111111,53)
(0.131313131,55)
(0.151515152,49)
(0.171717172,50)
(0.191919192,55)
(0.212121212,50)
(0.232323232,58)
(0.252525253,59)
(0.272727273,60)
(0.292929293,75)
(0.313131313,61)
(0.333333333,62)
(0.353535354,63)
(0.373737374,63)
(0.393939394,64)
(0.414141414,65)
(0.434343434,65)
(0.454545455,66)
(0.474747475,68)
(0.494949495,92)
(0.515151515,120)
(0.535353535,140)
(0.555555556,139)
(0.575757576,128)
(0.595959596,128)
(0.616161616,133)
(0.636363636,132)
(0.656565657,135)
(0.676767677,214)
(0.696969697,194)
(0.717171717,216)
(0.737373737,390)
(0.757575758,345)
(0.777777778,315)
(0.797979798,294)
(0.818181818,281)
(0.838383838,270)
(0.858585859,262)
(0.878787879,255)
(0.898989899,252)
(0.919191919,244)
(0.939393939,240)
(0.95959596,243)
(0.97979798,239)
(1,125)
};

\addlegendentry{Average}
% 12 agents
\addplot[
color=black,
line width=2pt
]
coordinates {
(0,50)
(0.01010101,50)
(0.03030303,51.2)
(0.050505051,51.6)
(0.070707071,51.6)
(0.090909091,52.4)
(0.111111111,53.6)
(0.131313131,56.8)
(0.151515152,56.6)
(0.171717172,55.2)
(0.191919192,56.6)
(0.212121212,58)
(0.232323232,66.8)
(0.252525253,68.2)
(0.272727273,60)
(0.292929293,67.4)
(0.313131313,62)
(0.333333333,62.8)
(0.353535354,62.6)
(0.373737374,67.2)
(0.393939394,74.6)
(0.414141414,74.6)
(0.434343434,81.8)
(0.454545455,82)
(0.474747475,90)
(0.494949495,91.8)
(0.515151515,100)
(0.535353535,102.4)
(0.555555556,102.8)
(0.575757576,101.4)
(0.595959596,103)
(0.616161616,84.6)
(0.636363636,95.6)
(0.656565657,97.8)
(0.676767677,143.2)
(0.696969697,148.6)
(0.717171717,149.8)
(0.737373737,187.6)
(0.757575758,202.8)
(0.777777778,272.8)
(0.797979798,250.4)
(0.818181818,240.8)
(0.838383838,235.6)
(0.858585859,262.6)
(0.878787879,256.2)
(0.898989899,251.6)
(0.919191919,244)
(0.939393939,241.4)
(0.95959596,241)
(0.97979798,245.2)
(1,125)
};

\end{axis}
\end{tikzpicture}}\\
    \subfloat[\label{fig:openToModerate_numClusters}]{\pgfplotsset{width=\columnwidth,height=0.28\textwidth,compat=1.9}\begin{tikzpicture}
\begin{axis}[
xlabel={Fraction of open-minded transformed to moderate-minded},
ylabel={\#(clusters) at equilibrium ($C_{eqm}$)},
xmin=0, xmax=1.05,
ymin=0, ymax=45,
xtick={0,0.1,0.2,0.3,0.4,0.5,0.6,0.7,0.8,0.9,1.0},
ytick={0,15,30,45},
xticklabel style={rotate=60},
legend style={at={(0,0)},anchor=south west,font=\footnotesize},
ymajorgrids=true,
grid style=dashed
%xmode=log
%ylabel style={rotate=-90},
%ymode=log,
%log basis y={10}
]

\addlegendentry{Run \#1}
% proposed
\addplot[
color=red,
mark=diamond,
mark size=0.6pt
]
coordinates {
(0,39)
(0.01010101,43)
(0.03030303,40)
(0.050505051,40)
(0.070707071,39)
(0.090909091,41)
(0.111111111,43)
(0.131313131,40)
(0.151515152,40)
(0.171717172,39)
(0.191919192,41)
(0.212121212,42)
(0.232323232,39)
(0.252525253,39)
(0.272727273,40)
(0.292929293,37)
(0.313131313,37)
(0.333333333,38)
(0.353535354,39)
(0.373737374,37)
(0.393939394,37)
(0.414141414,35)
(0.434343434,35)
(0.454545455,36)
(0.474747475,37)
(0.494949495,38)
(0.515151515,35)
(0.535353535,37)
(0.555555556,37)
(0.575757576,37)
(0.595959596,38)
(0.616161616,35)
(0.636363636,36)
(0.656565657,35)
(0.676767677,36)
(0.696969697,34)
(0.717171717,34)
(0.737373737,35)
(0.757575758,35)
(0.777777778,31)
(0.797979798,33)
(0.818181818,36)
(0.838383838,34)
(0.858585859,33)
(0.878787879,35)
(0.898989899,31)
(0.919191919,32)
(0.939393939,30)
(0.95959596,32)
(0.97979798,26)
(1,28)
};

\addlegendentry{Run \#2}
% 100 agents
\addplot[
color=LimeGreen,
mark=triangle,
mark size=0.5pt
]
coordinates {
(0,39)
(0.01010101,42)
(0.03030303,40)
(0.050505051,38)
(0.070707071,40)
(0.090909091,39)
(0.111111111,41)
(0.131313131,38)
(0.151515152,39)
(0.171717172,38)
(0.191919192,37)
(0.212121212,37)
(0.232323232,37)
(0.252525253,41)
(0.272727273,38)
(0.292929293,37)
(0.313131313,36)
(0.333333333,37)
(0.353535354,37)
(0.373737374,35)
(0.393939394,38)
(0.414141414,40)
(0.434343434,38)
(0.454545455,36)
(0.474747475,37)
(0.494949495,37)
(0.515151515,42)
(0.535353535,40)
(0.555555556,41)
(0.575757576,42)
(0.595959596,42)
(0.616161616,42)
(0.636363636,39)
(0.656565657,39)
(0.676767677,38)
(0.696969697,36)
(0.717171717,36)
(0.737373737,34)
(0.757575758,36)
(0.777777778,35)
(0.797979798,33)
(0.818181818,33)
(0.838383838,32)
(0.858585859,34)
(0.878787879,32)
(0.898989899,33)
(0.919191919,32)
(0.939393939,32)
(0.95959596,32)
(0.97979798,31)
(1,28)
};

\addlegendentry{Run \#3}
% 50 agents
\addplot[
color=ProcessBlue,
mark=square,
mark size=0.5pt
]
coordinates {
(0,39)
(0.01010101,40)
(0.03030303,40)
(0.050505051,40)
(0.070707071,40)
(0.090909091,41)
(0.111111111,40)
(0.131313131,41)
(0.151515152,42)
(0.171717172,40)
(0.191919192,38)
(0.212121212,41)
(0.232323232,40)
(0.252525253,37)
(0.272727273,39)
(0.292929293,38)
(0.313131313,40)
(0.333333333,40)
(0.353535354,39)
(0.373737374,37)
(0.393939394,36)
(0.414141414,38)
(0.434343434,37)
(0.454545455,38)
(0.474747475,35)
(0.494949495,35)
(0.515151515,36)
(0.535353535,36)
(0.555555556,38)
(0.575757576,40)
(0.595959596,40)
(0.616161616,39)
(0.636363636,38)
(0.656565657,41)
(0.676767677,39)
(0.696969697,37)
(0.717171717,39)
(0.737373737,35)
(0.757575758,36)
(0.777777778,34)
(0.797979798,37)
(0.818181818,38)
(0.838383838,36)
(0.858585859,35)
(0.878787879,31)
(0.898989899,37)
(0.919191919,37)
(0.939393939,30)
(0.95959596,33)
(0.97979798,41)
(1,38)
};

\addlegendentry{Run \#4}
% 25 agents
\addplot[
color=blue,
mark=+,
mark size=0.8pt
]
coordinates {
(0,39)
(0.01010101,39)
(0.03030303,41)
(0.050505051,41)
(0.070707071,42)
(0.090909091,42)
(0.111111111,41)
(0.131313131,39)
(0.151515152,42)
(0.171717172,41)
(0.191919192,40)
(0.212121212,40)
(0.232323232,41)
(0.252525253,40)
(0.272727273,40)
(0.292929293,39)
(0.313131313,39)
(0.333333333,40)
(0.353535354,38)
(0.373737374,38)
(0.393939394,38)
(0.414141414,38)
(0.434343434,37)
(0.454545455,38)
(0.474747475,37)
(0.494949495,38)
(0.515151515,37)
(0.535353535,38)
(0.555555556,36)
(0.575757576,37)
(0.595959596,37)
(0.616161616,38)
(0.636363636,39)
(0.656565657,39)
(0.676767677,37)
(0.696969697,36)
(0.717171717,38)
(0.737373737,36)
(0.757575758,37)
(0.777777778,34)
(0.797979798,35)
(0.818181818,34)
(0.838383838,36)
(0.858585859,37)
(0.878787879,36)
(0.898989899,31)
(0.919191919,31)
(0.939393939,32)
(0.95959596,32)
(0.97979798,29)
(1,38)
};

\addlegendentry{Run \#5}
% 12 agents
\addplot[
color=orange,
mark=o,
mark size=0.5pt
]
coordinates {
(0,39)
(0.01010101,41)
(0.03030303,41)
(0.050505051,40)
(0.070707071,40)
(0.090909091,41)
(0.111111111,39)
(0.131313131,38)
(0.151515152,41)
(0.171717172,38)
(0.191919192,40)
(0.212121212,42)
(0.232323232,39)
(0.252525253,38)
(0.272727273,40)
(0.292929293,35)
(0.313131313,39)
(0.333333333,41)
(0.353535354,40)
(0.373737374,40)
(0.393939394,39)
(0.414141414,38)
(0.434343434,41)
(0.454545455,39)
(0.474747475,41)
(0.494949495,39)
(0.515151515,38)
(0.535353535,39)
(0.555555556,37)
(0.575757576,38)
(0.595959596,38)
(0.616161616,37)
(0.636363636,37)
(0.656565657,37)
(0.676767677,37)
(0.696969697,37)
(0.717171717,34)
(0.737373737,34)
(0.757575758,33)
(0.777777778,34)
(0.797979798,33)
(0.818181818,33)
(0.838383838,35)
(0.858585859,35)
(0.878787879,35)
(0.898989899,33)
(0.919191919,31)
(0.939393939,32)
(0.95959596,31)
(0.97979798,30)
(1,31)
};

\addlegendentry{Average}
% 12 agents
\addplot[
color=black,
line width=2pt
]
coordinates {
(0,39)
(0.01010101,41)
(0.03030303,40.4)
(0.050505051,39.8)
(0.070707071,40.2)
(0.090909091,40.8)
(0.111111111,40.8)
(0.131313131,39.2)
(0.151515152,40.8)
(0.171717172,39.2)
(0.191919192,39.2)
(0.212121212,40.4)
(0.232323232,39.2)
(0.252525253,39)
(0.272727273,39.4)
(0.292929293,37.2)
(0.313131313,38.2)
(0.333333333,39.2)
(0.353535354,38.6)
(0.373737374,37.4)
(0.393939394,37.6)
(0.414141414,37.8)
(0.434343434,37.6)
(0.454545455,37.4)
(0.474747475,37.4)
(0.494949495,37.4)
(0.515151515,37.6)
(0.535353535,38)
(0.555555556,37.8)
(0.575757576,38.8)
(0.595959596,39)
(0.616161616,38.2)
(0.636363636,37.8)
(0.656565657,38.2)
(0.676767677,37.4)
(0.696969697,36)
(0.717171717,36.2)
(0.737373737,34.8)
(0.757575758,35.4)
(0.777777778,33.6)
(0.797979798,34.2)
(0.818181818,34.8)
(0.838383838,34.6)
(0.858585859,34.8)
(0.878787879,33.8)
(0.898989899,33)
(0.919191919,32.6)
(0.939393939,31.2)
(0.95959596,32)
(0.97979798,31.4)
(1,32.6)
};

\end{axis}
\end{tikzpicture}}
    \caption{Effect of transforming open-minded agents to moderate-minded, in a population of 200 agents, where initial opinion profile $x(t=0)\in\mathcal{N}(0.5,\frac{0.5}{4})$ and contains $80\%$ open-minded agents and $20\%$ close-minded agents. Irrespective of which open-minded agents are chosen to be transformed, a higher proportion of moderate-minded agents in the population, on an average, implies delayed equilibrium time and lesser number of opinion clusters.}
    \label{fig:openToModerate_numEpochsClusters}
\end{figure}

In Figure~\ref{fig:openToModerate_simulation}, we observe the opinion dynamics in the same opinion profile with fixed initial position of close-minded agents, but with different proportions of moderate- and open-minded people. It is observed that when there are no moderate-minded agents, it gives rise to more number of opinion clusters most of which comprise either the close-minded agents or the open-minded, but not both. On the contrary, in presence of some moderate-minded agents, the time to reach equilibrium is larger and the number of opinion clusters is reduced. Moreover, now close-minded agents share the same cluster as open- and moderate-minded agents.

\begin{figure}[h!]
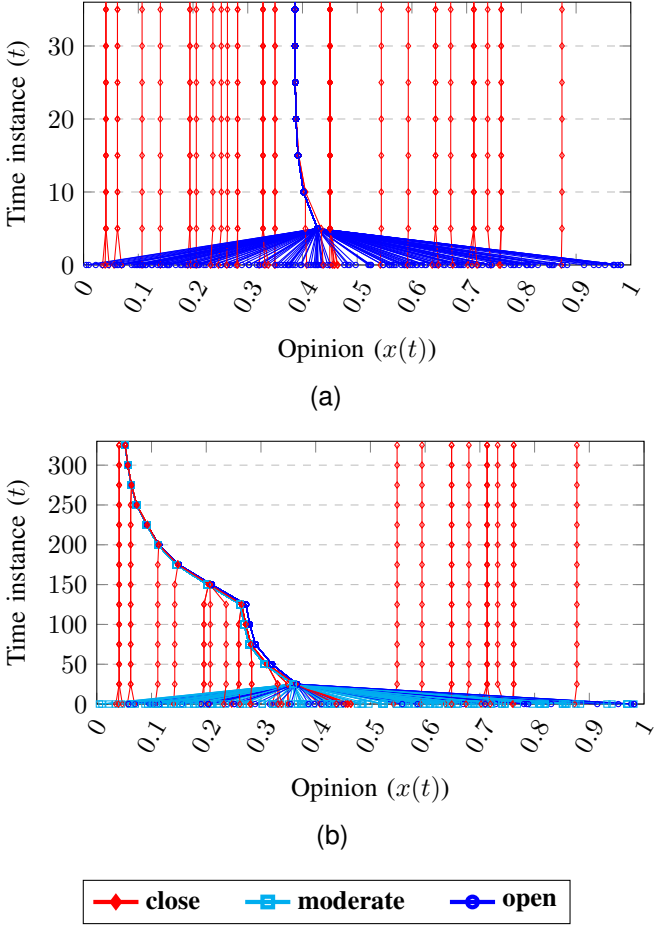

    \centering
    \subfloat[\label{fig:openToModerate_simulationZeroMod}]{\pgfplotsset{width=\columnwidth,height=0.28\textwidth,compat=1.9}
            \input{Figures/openToModerate_simulationZeroMod.tex}}\\
    \subfloat[\label{fig:openToModerate_simulationsixtyPercMod}]{\pgfplotsset{width=\columnwidth,height=0.28\textwidth,compat=1.9}
            \input{Figures/openToModerate_simulationsixtyPercMod.tex}}\\
    \subfloat{\pgfplotsset{width=.7\columnwidth,compat=1.9}
    \begin{tikzpicture}
        \begin{customlegend}[legend entries={{\bf close},{\bf moderate},{\bf open}},legend columns=3,legend style={/tikz/every even column/.append style={column sep=0.5cm}}]
        \addlegendimage{red,mark=diamond,ultra thick,sharp plot}
        \addlegendimage{cyan,mark=square,ultra thick,sharp plot}
        \addlegendimage{blue,mark=o,ultra thick,sharp plot}
        \end{customlegend}
    \end{tikzpicture}}
    \caption{Simulation of opinion dynamics in population of $20\%$ close-minded agents (fixed initial opinions) with \textbf{(a)} $80\%$ open-minded agents and \textbf{(b)} $45\%$ moderate- and $35\%$ open-minded agents, with same initial opinion profile. In \textbf{(a)}, there is one big opinion cluster comprising the open-minded agents and several small ones with the close-minded agents. But in presence of moderate-minded agents in \textbf{(b)}, the equilibrium time is slowed down whereby, larger opinion clusters get formed with the close- and open-minded agents together.}
    \label{fig:openToModerate_simulation}
\end{figure}

\vspace{2mm}
\subsubsection{An equally open- and close-minded initial population}
\label{subsubsec:newRandomModerate}

Thirdly, we consider a population  comprising $50\%$ open-minded and $50\%$ close-minded agents. Introducing new moderate-minded agents in the population, the trend is similar to the earlier setups. Here also, there is an increase in equilibrium time and decrease in the number of opinion clusters.

Overall, it can be observed that in a population consisting of only close- and moderate-minded agents, it is always the case that replacing existing agents with moderate-minded agents or introducing new, slows down the convergence process and in effect, reduces the number of opinion clusters. Even if moderate-minded agents are placed randomly in the opinion spectrum, this phenomenon is always observed in a heterogeneous population. Also, if the population contains some moderate-minded agents already, increasing their number will reduce the number of clusters, until the point where the population is solely comprised of moderate-minded agents.

\if(0)
\begin{figure}[h!]
    \centering
    \subfloat[\label{fig:newModerate_numEpochs}]{\pgfplotsset{width=\columnwidth,height=0.28\textwidth,compat=1.9}
                \begin{tikzpicture}
\begin{axis}[
xlabel={Fraction of new moderate-minded agents w.r.t population size},
ylabel={Time to equilibrium ($t_{eqm}$)},
xmin=0, xmax=1.05,
ymin=0, ymax=300,
xtick={0,0.1,0.2,0.3,0.4,0.5,0.6,0.7,0.8,0.9,1.0},
ytick={0,100,200,300},
xticklabel style={rotate=60},
legend style={at={(0,1)},anchor=north west,font=\footnotesize},
ymajorgrids=true,
grid style=dashed
%xmode=log
%ylabel style={rotate=-90},
%ymode=log,
%log basis y={10}
]

\addlegendentry{Run \#1}
% proposed
\addplot[
color=red,
mark=diamond,
mark size=0.6pt
]
coordinates {
(0,24)
(0.01010101,23)
(0.03030303,24)
(0.050505051,29)
(0.070707071,37)
(0.090909091,29)
(0.111111111,28)
(0.131313131,41)
(0.151515152,29)
(0.171717172,46)
(0.191919192,29)
(0.212121212,29)
(0.232323232,47)
(0.252525253,48)
(0.272727273,50)
(0.292929293,52)
(0.313131313,53)
(0.333333333,54)
(0.353535354,55)
(0.373737374,56)
(0.393939394,56)
(0.414141414,58)
(0.434343434,59)
(0.454545455,59)
(0.474747475,62)
(0.494949495,65)
(0.515151515,68)
(0.535353535,71)
(0.555555556,74)
(0.575757576,78)
(0.595959596,74)
(0.616161616,75)
(0.636363636,75)
(0.656565657,94)
(0.676767677,81)
(0.696969697,81)
(0.717171717,83)
(0.737373737,85)
(0.757575758,85)
(0.777777778,92)
(0.797979798,158)
(0.818181818,96)
(0.838383838,161)
(0.858585859,100)
(0.878787879,102)
(0.898989899,154)
(0.919191919,204)
(0.939393939,195)
(0.95959596,163)
(0.97979798,231)
(1,234)
};

\addlegendentry{Run \#2}
% 100 agents
\addplot[
color=LimeGreen,
mark=triangle,
mark size=0.5pt
]
coordinates {
(0,24)
(0.01010101,24)
(0.03030303,25)
(0.050505051,28)
(0.070707071,27)
(0.090909091,33)
(0.111111111,28)
(0.131313131,37)
(0.151515152,36)
(0.171717172,42)
(0.191919192,43)
(0.212121212,44)
(0.232323232,39)
(0.252525253,42)
(0.272727273,46)
(0.292929293,48)
(0.313131313,50)
(0.333333333,52)
(0.353535354,53)
(0.373737374,53)
(0.393939394,56)
(0.414141414,59)
(0.434343434,57)
(0.454545455,58)
(0.474747475,59)
(0.494949495,60)
(0.515151515,66)
(0.535353535,73)
(0.555555556,75)
(0.575757576,70)
(0.595959596,70)
(0.616161616,73)
(0.636363636,79)
(0.656565657,80)
(0.676767677,170)
(0.696969697,146)
(0.717171717,146)
(0.737373737,147)
(0.757575758,144)
(0.777777778,144)
(0.797979798,146)
(0.818181818,142)
(0.838383838,198)
(0.858585859,135)
(0.878787879,196)
(0.898989899,211)
(0.919191919,257)
(0.939393939,248)
(0.95959596,224)
(0.97979798,245)
(1,300)
};

\addlegendentry{Run \#3}
% 50 agents
\addplot[
color=ProcessBlue,
mark=square,
mark size=0.5pt
]
coordinates {
(0,24)
(0.01010101,24)
(0.03030303,25)
(0.050505051,26)
(0.070707071,28)
(0.090909091,28)
(0.111111111,38)
(0.131313131,30)
(0.151515152,32)
(0.171717172,40)
(0.191919192,42)
(0.212121212,42)
(0.232323232,45)
(0.252525253,47)
(0.272727273,47)
(0.292929293,69)
(0.313131313,50)
(0.333333333,54)
(0.353535354,55)
(0.373737374,57)
(0.393939394,59)
(0.414141414,61)
(0.434343434,61)
(0.454545455,64)
(0.474747475,63)
(0.494949495,63)
(0.515151515,69)
(0.535353535,70)
(0.555555556,71)
(0.575757576,74)
(0.595959596,73)
(0.616161616,76)
(0.636363636,77)
(0.656565657,79)
(0.676767677,111)
(0.696969697,136)
(0.717171717,95)
(0.737373737,174)
(0.757575758,142)
(0.777777778,92)
(0.797979798,94)
(0.818181818,94)
(0.838383838,94)
(0.858585859,109)
(0.878787879,196)
(0.898989899,74)
(0.919191919,109)
(0.939393939,281)
(0.95959596,263)
(0.97979798,115)
(1,110)
};

\addlegendentry{Run \#4}
% 25 agents
\addplot[
color=blue,
mark=+,
mark size=0.8pt
]
coordinates {
(0,24)
(0.01010101,24)
(0.03030303,25)
(0.050505051,32)
(0.070707071,30)
(0.090909091,30)
(0.111111111,30)
(0.131313131,37)
(0.151515152,31)
(0.171717172,32)
(0.191919192,42)
(0.212121212,46)
(0.232323232,45)
(0.252525253,47)
(0.272727273,51)
(0.292929293,51)
(0.313131313,53)
(0.333333333,55)
(0.353535354,56)
(0.373737374,55)
(0.393939394,54)
(0.414141414,62)
(0.434343434,65)
(0.454545455,63)
(0.474747475,63)
(0.494949495,67)
(0.515151515,64)
(0.535353535,66)
(0.555555556,82)
(0.575757576,54)
(0.595959596,72)
(0.616161616,83)
(0.636363636,84)
(0.656565657,57)
(0.676767677,131)
(0.696969697,124)
(0.717171717,83)
(0.737373737,139)
(0.757575758,120)
(0.777777778,157)
(0.797979798,149)
(0.818181818,150)
(0.838383838,148)
(0.858585859,100)
(0.878787879,158)
(0.898989899,151)
(0.919191919,156)
(0.939393939,156)
(0.95959596,162)
(0.97979798,249)
(1,76)
};

\addlegendentry{Run \#5}
% 12 agents
\addplot[
color=orange,
mark=o,
mark size=0.5pt
]
coordinates {
(0,24)
(0.01010101,24)
(0.03030303,26)
(0.050505051,25)
(0.070707071,27)
(0.090909091,33)
(0.111111111,35)
(0.131313131,40)
(0.151515152,36)
(0.171717172,39)
(0.191919192,44)
(0.212121212,36)
(0.232323232,64)
(0.252525253,51)
(0.272727273,51)
(0.292929293,52)
(0.313131313,53)
(0.333333333,55)
(0.353535354,58)
(0.373737374,65)
(0.393939394,70)
(0.414141414,81)
(0.434343434,64)
(0.454545455,67)
(0.474747475,68)
(0.494949495,72)
(0.515151515,66)
(0.535353535,74)
(0.555555556,110)
(0.575757576,110)
(0.595959596,123)
(0.616161616,109)
(0.636363636,134)
(0.656565657,135)
(0.676767677,140)
(0.696969697,85)
(0.717171717,141)
(0.737373737,145)
(0.757575758,148)
(0.777777778,146)
(0.797979798,145)
(0.818181818,149)
(0.838383838,148)
(0.858585859,151)
(0.878787879,148)
(0.898989899,193)
(0.919191919,246)
(0.939393939,245)
(0.95959596,238)
(0.97979798,229)
(1,233)
};

\addlegendentry{Average}
% 12 agents
\addplot[
color=black,
line width=2pt
]
coordinates {
(0,24)
(0.01010101,23.8)
(0.03030303,25)
(0.050505051,28)
(0.070707071,29.8)
(0.090909091,30.6)
(0.111111111,31.8)
(0.131313131,37)
(0.151515152,32.8)
(0.171717172,39.8)
(0.191919192,40)
(0.212121212,39.4)
(0.232323232,48)
(0.252525253,47)
(0.272727273,49)
(0.292929293,54.4)
(0.313131313,51.8)
(0.333333333,54)
(0.353535354,55.4)
(0.373737374,57.2)
(0.393939394,59)
(0.414141414,64.2)
(0.434343434,61.2)
(0.454545455,62.2)
(0.474747475,63)
(0.494949495,65.4)
(0.515151515,66.6)
(0.535353535,70.8)
(0.555555556,82.4)
(0.575757576,77.2)
(0.595959596,82.4)
(0.616161616,83.2)
(0.636363636,89.8)
(0.656565657,89)
(0.676767677,126.6)
(0.696969697,114.4)
(0.717171717,109.6)
(0.737373737,138)
(0.757575758,127.8)
(0.777777778,126.2)
(0.797979798,138.4)
(0.818181818,126.2)
(0.838383838,149.8)
(0.858585859,119)
(0.878787879,160)
(0.898989899,156.6)
(0.919191919,194.4)
(0.939393939,225)
(0.95959596,210)
(0.97979798,213.8)
(1,190.6)
};

\end{axis}
\end{tikzpicture}}\\
    \subfloat[\label{fig:newModerate_numClusters}]{\pgfplotsset{width=\columnwidth,height=0.28\textwidth,compat=1.9}
                \input{Figures/newModerate_numClusters.tex}}
    \caption{Effect of introducing new moderate-minded agents, in a population of 200 agents, where initial opinion profile $x(t=0)\in\mathcal{N}(0.5,\frac{0.5}{4})$ and contains $50\%$ open-minded agents and $50\%$ close-minded agents. With new moderate-minded agents introduced at random positions in the initial opinion spectrum, the equilibrium time is delayed and the number of opinion clusters is reduced.}
    \label{fig:newModerate_numEpochsClusters}
\end{figure}
\fi

\begin{figure}[h!]
    \centering
    \subfloat[\label{fig:optimalPlcmnt_numEpochs}]{\pgfplotsset{width=\columnwidth,height=0.28\textwidth,compat=1.9}
            \begin{tikzpicture}
\begin{axis}[
xlabel={Fraction of new moderate-minded agents w.r.t population size},
ylabel={Time to equilibrium ($t_{eqm}$)},
xmin=0, xmax=1.05,
ymin=0, ymax=800,
xtick={0,0.1,0.2,0.3,0.4,0.5,0.6,0.7,0.8,0.9,1.0},
ytick={0,200,400,600,800},
xticklabel style={rotate=60},
legend style={at={(0,1)},anchor=north west,font=\footnotesize},
ymajorgrids=true,
grid style=dashed
%xmode=log
%ylabel style={rotate=-90},
%ymode=log,
%log basis y={10}
]

\addlegendentry{Average of Random Placement}
\addplot[
color=black,
line width=1pt,
mark=o,
mark size=1pt
]
coordinates {
(0,24)
(0.01010101,23.8)
(0.02020202,25.2)
(0.03030303,25)
(0.04040404,25.4)
(0.050505051,28)
(0.060606061,27.2)
(0.070707071,29.8)
(0.080808081,28.8)
(0.090909091,30.6)
(0.101010101,34.6)
(0.111111111,31.8)
(0.121212121,32.8)
(0.131313131,37)
(0.141414141,35.6)
(0.151515152,32.8)
(0.161616162,38.4)
(0.171717172,39.8)
(0.181818182,37.2)
(0.191919192,40)
(0.202020202,39.4)
(0.212121212,39.4)
(0.222222222,41.8)
(0.232323232,48)
(0.242424242,43.2)
(0.252525253,47)
(0.262626263,48.8)
(0.272727273,49)
(0.282828283,49.4)
(0.292929293,54.4)
(0.303030303,50.8)
(0.313131313,51.8)
(0.323232323,53)
(0.333333333,54)
(0.343434343,56.6)
(0.353535354,55.4)
(0.363636364,54.8)
(0.373737374,57.2)
(0.383838384,58.2)
(0.393939394,59)
(0.404040404,60.2)
(0.414141414,64.2)
(0.424242424,60.4)
(0.434343434,61.2)
(0.444444444,61)
(0.454545455,62.2)
(0.464646465,63.8)
(0.474747475,63)
(0.484848485,65.6)
(0.494949495,65.4)
(0.505050505,67.2)
(0.515151515,66.6)
(0.525252525,65.8)
(0.535353535,70.8)
(0.545454545,77.8)
(0.555555556,82.4)
(0.565656566,72.6)
(0.575757576,77.2)
(0.585858586,77.4)
(0.595959596,82.4)
(0.606060606,73.8)
(0.616161616,83.2)
(0.626262626,85.8)
(0.636363636,89.8)
(0.646464646,92.4)
(0.656565657,89)
(0.666666667,100)
(0.676767677,126.6)
(0.686868687,111.2)
(0.696969697,114.4)
(0.707070707,124.8)
(0.717171717,109.6)
(0.727272727,139.4)
(0.737373737,138)
(0.747474747,129)
(0.757575758,127.8)
(0.767676768,118.2)
(0.777777778,126.2)
(0.787878788,131.2)
(0.797979798,138.4)
(0.808080808,121.8)
(0.818181818,126.2)
(0.828282828,122.6)
(0.838383838,149.8)
(0.848484848,139.8)
(0.858585859,119)
(0.868686869,129.4)
(0.878787879,160)
(0.888888889,134.2)
(0.898989899,156.6)
(0.909090909,160.2)
(0.919191919,194.4)
(0.929292929,189.4)
(0.939393939,225)
(0.949494949,219.4)
(0.95959596,210)
(0.96969697,178.4)
(0.97979798,213.8)
(0.98989899,197.6)
(1,190.6)
};

\addlegendentry{Intelligent Placement}
\addplot[
color=red,
line width=1pt,
mark=square,
mark size=1pt
]
coordinates {
(0,24)
(0.01010101,24)
(0.02020202,24)
(0.03030303,24)
(0.04040404,24)
(0.050505051,24)
(0.060606061,24)
(0.070707071,24)
(0.080808081,24)
(0.090909091,24)
(0.101010101,24)
(0.111111111,24)
(0.121212121,24)
(0.131313131,24)
(0.141414141,24)
(0.151515152,24)
(0.161616162,24)
(0.171717172,24)
(0.181818182,24)
(0.191919192,24)
(0.202020202,24)
(0.212121212,24)
(0.222222222,24)
(0.232323232,24)
(0.242424242,24)
(0.252525253,24)
(0.262626263,24)
(0.272727273,24)
(0.282828283,129)
(0.292929293,129)
(0.303030303,129)
(0.313131313,129)
(0.323232323,129)
(0.333333333,129)
(0.343434343,129)
(0.353535354,129)
(0.363636364,129)
(0.373737374,129)
(0.383838384,129)
(0.393939394,129)
(0.404040404,129)
(0.414141414,129)
(0.424242424,129)
(0.434343434,129)
(0.444444444,129)
(0.454545455,129)
(0.464646465,129)
(0.474747475,129)
(0.484848485,129)
(0.494949495,129)
(0.505050505,519)
(0.515151515,519)
(0.525252525,519)
(0.535353535,519)
(0.545454545,564)
(0.555555556,564)
(0.565656566,564)
(0.575757576,564)
(0.585858586,564)
(0.595959596,564)
(0.606060606,564)
(0.616161616,564)
(0.626262626,564)
(0.636363636,564)
(0.646464646,564)
(0.656565657,564)
(0.666666667,564)
(0.676767677,564)
(0.686868687,501)
(0.696969697,501)
(0.707070707,501)
(0.717171717,501)
(0.727272727,585)
(0.737373737,585)
(0.747474747,585)
(0.757575758,585)
(0.767676768,585)
(0.777777778,585)
(0.787878788,585)
(0.797979798,585)
(0.808080808,585)
(0.818181818,585)
(0.828282828,585)
(0.838383838,585)
(0.848484848,585)
(0.858585859,562)
(0.868686869,570)
(0.878787879,570)
(0.888888889,570)
(0.898989899,570)
(0.909090909,589)
(0.919191919,589)
(0.929292929,589)
(0.939393939,589)
(0.949494949,589)
(0.95959596,589)
(0.96969697,589)
(0.97979798,589)
(0.98989899,589)
(1,589)
};

\end{axis}
\end{tikzpicture}}\\
    \subfloat[\label{fig:optimalPlcmnt_numClusters}]{\pgfplotsset{width=\columnwidth,height=0.28\textwidth,compat=1.9}
            \begin{tikzpicture}
\begin{axis}[
xlabel={Fraction of new moderate-minded agents w.r.t population size},
ylabel={\#(clusters) at equilibrium ($C_{eqm}$)},
xmin=0, xmax=1.05,
ymin=0, ymax=60,
xtick={0,0.1,0.2,0.3,0.4,0.5,0.6,0.7,0.8,0.9,1.0},
ytick={0,15,30,45,60},
xticklabel style={rotate=60},
legend style={at={(0,1)},anchor=north west,font=\footnotesize},
ymajorgrids=true,
grid style=dashed
%xmode=log
%ylabel style={rotate=-90},
%ymode=log,
%log basis y={10}
]

\addlegendentry{Average of Random Placement}
\addplot[
color=black,
line width=1pt,
mark=o,
mark size=1pt
]
coordinates {
(0,39)
(0.01010101,41)
(0.02020202,41.4)
(0.03030303,40.4)
(0.04040404,40.4)
(0.050505051,39.8)
(0.060606061,39.8)
(0.070707071,40.2)
(0.080808081,40.4)
(0.090909091,40.8)
(0.101010101,41)
(0.111111111,40.8)
(0.121212121,39.8)
(0.131313131,39.2)
(0.141414141,40.4)
(0.151515152,40.8)
(0.161616162,39.4)
(0.171717172,39.2)
(0.181818182,39.2)
(0.191919192,39.2)
(0.202020202,39.4)
(0.212121212,40.4)
(0.222222222,39.2)
(0.232323232,39.2)
(0.242424242,39.2)
(0.252525253,39)
(0.262626263,39)
(0.272727273,39.4)
(0.282828283,38.2)
(0.292929293,37.2)
(0.303030303,38.8)
(0.313131313,38.2)
(0.323232323,38.8)
(0.333333333,39.2)
(0.343434343,37.6)
(0.353535354,38.6)
(0.363636364,37.8)
(0.373737374,37.4)
(0.383838384,38.4)
(0.393939394,37.6)
(0.404040404,36.8)
(0.414141414,37.8)
(0.424242424,37.8)
(0.434343434,37.6)
(0.444444444,38)
(0.454545455,37.4)
(0.464646465,37.6)
(0.474747475,37.4)
(0.484848485,37.4)
(0.494949495,37.4)
(0.505050505,37)
(0.515151515,37.6)
(0.525252525,38.2)
(0.535353535,38)
(0.545454545,38.4)
(0.555555556,37.8)
(0.565656566,38.6)
(0.575757576,38.8)
(0.585858586,39.4)
(0.595959596,39)
(0.606060606,39.2)
(0.616161616,38.2)
(0.626262626,38.2)
(0.636363636,37.8)
(0.646464646,37.4)
(0.656565657,38.2)
(0.666666667,36.6)
(0.676767677,37.4)
(0.686868687,37)
(0.696969697,36)
(0.707070707,36)
(0.717171717,36.2)
(0.727272727,35)
(0.737373737,34.8)
(0.747474747,35)
(0.757575758,35.4)
(0.767676768,35)
(0.777777778,33.6)
(0.787878788,34.4)
(0.797979798,34.2)
(0.808080808,36.6)
(0.818181818,34.8)
(0.828282828,35.2)
(0.838383838,34.6)
(0.848484848,35)
(0.858585859,34.8)
(0.868686869,35.4)
(0.878787879,33.8)
(0.888888889,34.2)
(0.898989899,33)
(0.909090909,34.2)
(0.919191919,32.6)
(0.929292929,32.6)
(0.939393939,31.2)
(0.949494949,31.4)
(0.95959596,32)
(0.96969697,33.2)
(0.97979798,31.4)
(0.98989899,31.2)
(1,32.6)
};

\addlegendentry{Intelligent Placement}
\addplot[
color=red,
line width=1pt,
mark=square,
mark size=1pt
]
coordinates {
(0,39)
(0.01010101,39)
(0.02020202,39)
(0.03030303,39)
(0.04040404,39)
(0.050505051,39)
(0.060606061,39)
(0.070707071,39)
(0.080808081,39)
(0.090909091,39)
(0.101010101,39)
(0.111111111,39)
(0.121212121,39)
(0.131313131,39)
(0.141414141,39)
(0.151515152,39)
(0.161616162,39)
(0.171717172,39)
(0.181818182,39)
(0.191919192,39)
(0.202020202,39)
(0.212121212,39)
(0.222222222,39)
(0.232323232,39)
(0.242424242,39)
(0.252525253,39)
(0.262626263,39)
(0.272727273,39)
(0.282828283,34)
(0.292929293,34)
(0.303030303,34)
(0.313131313,34)
(0.323232323,34)
(0.333333333,34)
(0.343434343,34)
(0.353535354,34)
(0.363636364,34)
(0.373737374,34)
(0.383838384,34)
(0.393939394,34)
(0.404040404,34)
(0.414141414,34)
(0.424242424,34)
(0.434343434,34)
(0.444444444,34)
(0.454545455,34)
(0.464646465,34)
(0.474747475,34)
(0.484848485,34)
(0.494949495,34)
(0.505050505,10)
(0.515151515,10)
(0.525252525,10)
(0.535353535,10)
(0.545454545,9)
(0.555555556,9)
(0.565656566,9)
(0.575757576,9)
(0.585858586,9)
(0.595959596,9)
(0.606060606,9)
(0.616161616,9)
(0.626262626,9)
(0.636363636,9)
(0.646464646,9)
(0.656565657,9)
(0.666666667,9)
(0.676767677,9)
(0.686868687,7)
(0.696969697,7)
(0.707070707,7)
(0.717171717,7)
(0.727272727,10)
(0.737373737,10)
(0.747474747,10)
(0.757575758,10)
(0.767676768,10)
(0.777777778,10)
(0.787878788,10)
(0.797979798,10)
(0.808080808,10)
(0.818181818,10)
(0.828282828,10)
(0.838383838,10)
(0.848484848,10)
(0.858585859,7)
(0.868686869,7)
(0.878787879,7)
(0.888888889,7)
(0.898989899,7)
(0.909090909,5)
(0.919191919,5)
(0.929292929,5)
(0.939393939,5)
(0.949494949,5)
(0.95959596,5)
(0.96969697,5)
(0.97979798,5)
(0.98989899,5)
(1,5)
};

\end{axis}
\end{tikzpicture}}
    \caption{Comparison of the effect of introducing new moderate-minded agents, for random (Section~\ref{subsubsec:newRandomModerate}) vs. intelligent (Algorithm~\ref{alg:optPlcmnt}) placement. The population consists of 200 agents, where initial opinion profile $x(t=0)\in\mathcal{N}(0.5,\frac{0.5}{4})$ and contains $50\%$ open-minded and $50\%$ close-minded agents. When moderate-minded agents are placed intelligently at calculated positions over opinion space-time spectrum, the effect is much more pronounced in terms of delaying the equilibrium time and reducing (and enlarging) the opinion clusters.}
    \label{fig:optimalPlcmnt_numEpochsClusters}
\end{figure}

\subsection{Effect of Intelligent Placement of Moderate-Minded Agents}
\label{subsec:optimalModAgentPlacement}

We now evaluate our algorithm for intelligent placement of moderate-minded agents, as described in Algorithm~\ref{alg:optPlcmnt}. The setup is similar to that in random placement of moderate-minded agents. We consider a heterogeneous population of $200$ people, $50\%$ of whom are close-minded and remaining $50\%$ are open-minded. Each person belonging to the close-minded subgroup has $\epsilon_c=0.01$ and every open-minded person has $\epsilon_o=0.45$. Initial opinion profiles of both the open- and close-minded subgroups are sampled from a normal distribution $\mathcal{N}(0.5,\frac{0.5}{4})$ centred at $0.5$ and clipped between $0$ and $1$. We now introduce new moderate-minded agents ($\epsilon_m=0.20$) in the population, such that their placement in space and time is optimal (from Algorithm~\ref{alg:optPlcmnt}) and we study the effect in Figure~\ref{fig:optimalPlcmnt_numEpochsClusters}. It is observed that, in comparison to random placement, intelligent placement slows down the convergence process to a much greater extent. As a result, lesser number of opinion clusters are formed at equilibrium.

%==========SECTION===============
\section{Conclusion}
\label{sec:conclusion}

In this paper, we study the process of echo-chamber formation in populations where people have different ranges of openness towards accepting others' views. Modeling the dynamics of opinion in a bounded confidence model, it was observed that in such a population there is a tendency towards formation of a high number of opinion clusters (echo chambers) because of the disparity in convergence speed of different agents. There is a certain section of people having `moderate'-openness who have the power to bridge the gap between this disparity -- they delay the convergence of open-minded agents and allow close-minded agents to interact with the open-minded in this delayed time. It can be observed that in a population consisting of only close- and moderate-minded agents, replacing existing agents with moderate-minded or introducing new moderate-minded agents randomly  slows down the convergence process and in effect, reduces the number of opinion clusters. The paper also proposes an algorithm for intelligent placement of moderate-minded agents in the opinion space and time. This optimally slows down the fast convergence of open-minded agents and reduces the number of opinion echo chambers to a great extent, in comparison to random placement. Thus, external moderate-minded agents bearing specific opinions if introduced intelligently over time, can serve to be beneficial in diminishing the multiple small echo chambers and thus, facilitate more healthy exchange of opinions in the population.

\bibliographystyle{ieeetr}
\bibliography{myreferences}

\vspace{-12mm}

\begin{IEEEbiography}[{\includegraphics[width=1in,height=1.25in,clip,keepaspectratio]{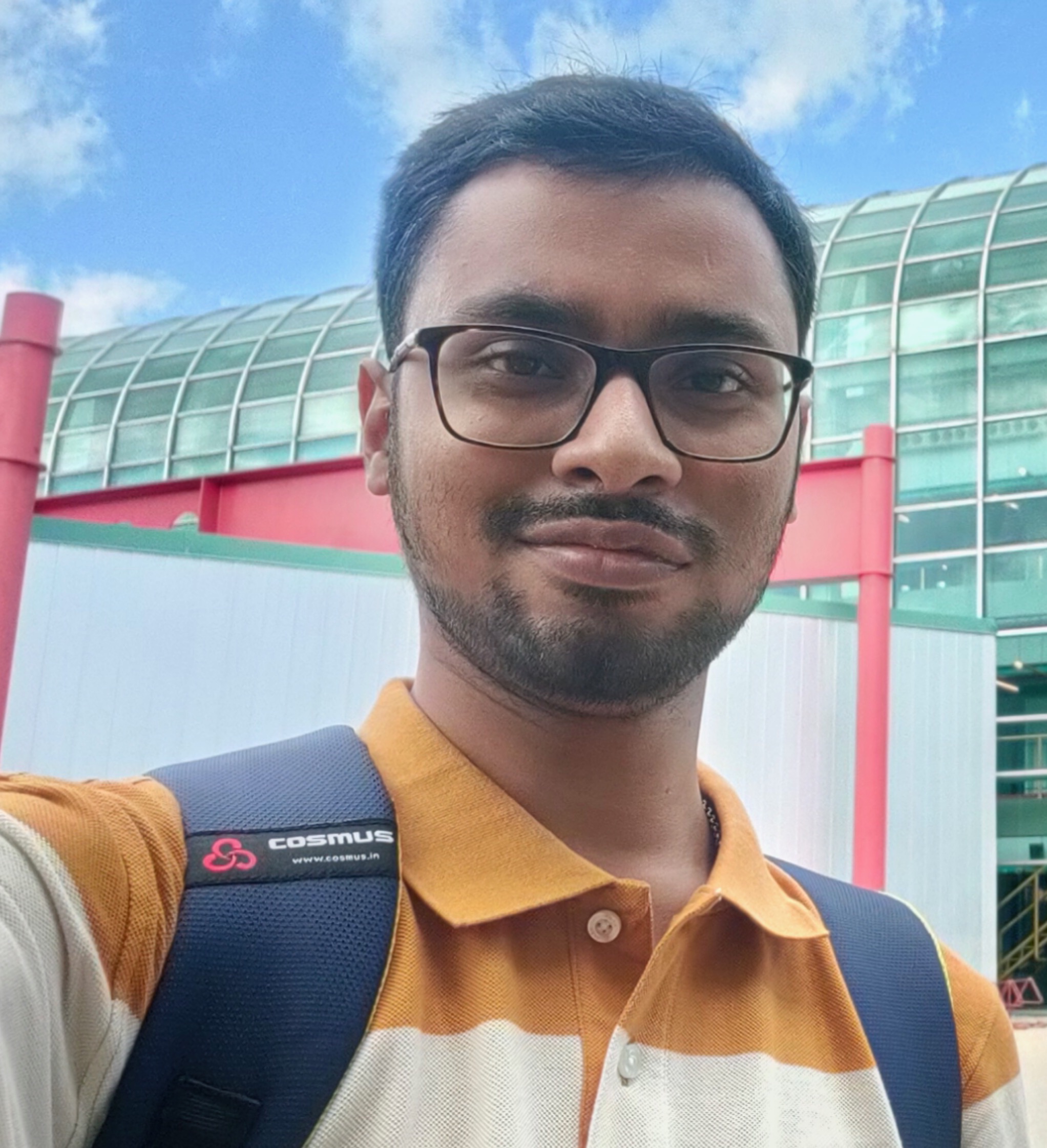}}]{Prithwish Jana} (Member, IEEE)
is a PhD student in Computer Science at the David R. Cheriton School of Computer Science, University of Waterloo, Canada. Prior to joining UW, he did his Masters (MTech) in Computer Science from IIT Kharagpur where he received the Institute Silver Medal and got selected for the Best Masters’ Thesis Award. He did his Bachelors (BE) in Computer Science from Jadavpur University, where he received the University Gold Medal for standing first in the department. He received prestigious accolades and fellowships from the Govt. of India viz., JBNSTS, KVPY and NTSE. His research interests include AI, logic solvers, machine learning, social computing, natural language processing and computer vision. \end{IEEEbiography}

\vspace{-12mm}

\begin{IEEEbiography}[{\includegraphics[width=1in,height=1.25in,clip,keepaspectratio]{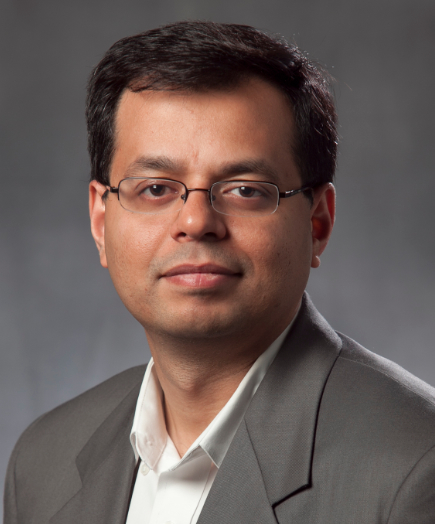}}]{Romit Roy Chowdhury} (Fellow, IEEE)
is a Jerry Sanders III AMD Scholar and Professor of ECE and CS at the University of Illinois at Urbana Champaign (UIUC). He joined UIUC from Fall 2013, prior to which he was an Associate Professor at Duke University. Romit received his PhD in the CS department of UIUC in Fall 2006. His research interests are in wireless networking, embedded sensing, and applied signal processing. Along with his students, he received a few research awards, including the ACM Sigmobile Rockstar Award, the UIUC Distinguished Alumni Award, the 2017 MobiSys Best Paper Award, etc. He was elevated to IEEE Fellow in 2018.\end{IEEEbiography}

\vspace{-12mm}

\begin{IEEEbiography}[{\includegraphics[width=1in,height=1.25in,clip,keepaspectratio]{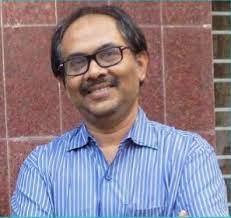}}]{Niloy Ganguly} (Senior Member, IEEE)
is a Professor in the Dept. of Computer Science and Engineering at IIT Kharagpur and a Fellow of Indian Academy of Engineering. He was a Visiting Professor in Leibnitz University of Hannover for two years from 2021 - 2022.  He has done his BTech from IIT Kharagpur and his PhD from IIEST, Shibpur. His research interests lie primarily in Social Computing, Machine Learning, and Network Science. He has regularly published in top international venues such as NeurIPS, KDD, ICDM, CSCW, AAAI, IJCAI, WWW, ACL, EMNLP, CHI, ICWSM,  IEEE and ACM Transaction etc and have received several best paper awards. \end{IEEEbiography}

\end{document}